\newtheorem{theorem}{\bf{Theorem}}[section]
\newtheorem{cor}[theorem]{Corollary}
\newtheorem{lem}[theorem]{Lemma}
\theoremstyle{plain}
\newcounter{algno} 
\newenvironment{alg}{\refstepcounter{algno}\small\tabular}{\endtabular}
\title{\LARGE \bf
Formation of Robust Multi-Agent Networks Through Self-Organizing Random Regular Graphs
}
\author{A. Yasin Yaz{\i}c{\i}o\u{g}lu, Magnus Egerstedt, and Jeff S. Shamma\\
\thanks{ This work was supported by AFOSR/MURI \#FA9550-10-1-0573 and ONR Project \#N00014-09-1-0751.\newline \indent  A. Yasin Yaz{\i}c{\i}o\u{g}lu is with the Laboratory for Information and Decision Systems, Massachusetts Institute of Technology, {\tt yasiny@mit.edu}. \newline \indent
Magnus Egerstedt is with the School of Electrical and Computer Engineering, Georgia Institute of Technology, {\tt magnus@gatech.edu}.\newline \indent
Jeff S. Shamma is with the School of Electrical and Computer Engineering, Georgia Institute of Technology, {\tt shamma@gatech.edu}, and with King Abdullah University of Science and Technology (KAUST), {\tt jeff.shamma@kaust.edu.sa.}}}
\begin{document}
\maketitle
\begin{abstract}
Multi-agent networks are often modeled as interaction graphs, where the nodes represent the agents and the edges denote some direct interactions. The robustness of a multi-agent network to perturbations such as failures, noise, or malicious attacks largely depends on the corresponding graph. In many applications, networks are desired to have well-connected interaction graphs with relatively small number of links. One family of such graphs is the random regular graphs. In this paper, we present a decentralized scheme for transforming any connected interaction graph with a possibly non-integer average degree of $k$ into a connected random $m$-regular graph for some $m\in [k, k+2]$. Accordingly, the agents improve the robustness of the network with a minimal change in the overall sparsity by optimizing the graph connectivity through the proposed local operations.

\end{abstract}
\section{Introduction}
Multi-agent networks have been used to characterize a large number of natural and engineered systems such as biological systems, financial networks, social networks, communication systems, transportation systems, power grids, and robotic swarms. Multi-agent networks are often represented via their interaction graphs, where the nodes correspond to the agents and the edges exist between the agents having some direct interaction. Various system properties such as the robustness, mixing time, or controllability of a network greatly depend on the structure of the corresponding graph (e.g., \cite{Dekker04,Jamakovic07, Olfati05Ultra,Rahmani09,Yasin12CDC}). Therefore, graph theoretic
analysis of networked systems has received a considerable
amount of attention during the last decade (e.g., \cite{Mesbahi10, Barrat08}). 

In many applications, multi-agent networks face various perturbations such as component failures, noise, or malicious attacks. One of the fundamental measures that capture the robustness of networks to such perturbations is the connectivity of the corresponding graph. A graph is said to be $k$-node (or -edge) connected if at least $k$ nodes (or edges) should be removed to render the graph disconnected. In general, graphs with higher connectivity have higher robustness to the targeted removal of their components \cite{Dekker04, Jamakovic07}.  For many networks, well-connected interaction graphs provide robustness to not only the structural failures but also the functional perturbations such as noise (e.g., \cite{Young10, Abbas12}). Connectivity can also be quantified via some spectral measures such as the algebraic connectivity \cite{Fiedler73} or the Kirchhoff index \cite{Klein93}. An arguably richer measure of connectivity is the edge (or node) expansion ratio (e.g., \cite{Pinsker73,Alon86}). If the expansion ratio of a graph is small, then it  is possible to disconnect a large set of nodes by removing only a small number of edges (or nodes). Graphs with high expansion ratios are called expanders. A detailed overview of expanders and their numerous applications can be found in \cite{Hoory06} and the references therein. 

The connectivity of a network can be improved by adding more edges to the graph. However, each edge stands for some communications, sensing, or a physical link between the corresponding agents. Hence, more edges require more resources. Moreover, too many edges may lead to higher vulnerability to the cascading failures such as epidemics (e.g., \cite{Blume11,Ganesh05}). Consequently, having a small number of edges (i.e., sparsity) is desired in many applications. One family of well-connected yet sparse graphs is the random regular graphs. A graph is called an $m$-regular graph if  the number of edges incident to each node (the degree) is equal to $m$. A random $m$-regular graph of order $n$ is a graph that is selected uniformly at random from the set of all $m$-regular graphs with $n$ nodes. As $n$ goes to infinity, almost every $m$-regular graph is an expander for any $m \geq 3$ \cite{Alon86,Friedman03}. 

In this paper, we present a decentralized scheme to transform any connected interaction graph into a connected random regular graph with a similar number of edges as the initial graph. Some preliminary results of this work were presented in \cite{Yasin13CDC} and \cite{Yasin14CDC}, where any initial graph with an integer average degree, $k \in \mathbb{N}$, was transformed into a random $k$-regular graph. In this paper, we extend our earlier works to obtain a method that is applicable to the most generic case, i.e. $k \in \mathbb{R}$. The proposed method transforms any graph with a possibly non-integer average degree of $k$ into a random $m$-regular graph for some $m \in [k, k+2]$. As such, the graph becomes well-connected, regardless of the network size.







The organization of this paper is as follows: Section \ref{relwork} presents some related work. Section \ref{prelim} provides some graph theory preliminaries. Section \ref{scheme} formulates the problem and presents the proposed solution. Section \ref{distalg} provides a distributed implementation and an analysis of the resulting dynamics. Section \ref{sims} provides some simulation results to demonstrate the performance of the proposed solution. Finally, Section \ref{conclusion} concludes the paper.




%

\section{Related Work}
\label{relwork}
  
In some applications, the robustness of a graph is related to the centrality measures such as the degree, betweenness, closeness, and eigenvector centralities. Loosely speaking, the centrality measures capture the relative importances of the nodes in a graph. Detailed reviews on the centrality measures and their applications can be found in \cite{Borgatti05,Koschutzki05} and the references therein. Typically, the perturbations applied to the nodes with higher centrality scores have a stronger impact on the overall system (e.g., \cite{Acemoglu12,Jeong01,Holme02,Wang10}). Hence, graphs with unbalanced centrality distributions are usually vulnerable to such worst-case perturbations.

In the literature, there are many works related to the construction of robust interaction graphs. Some of these works consider how a robustness measure can be improved via some modifications to the graph topology. For instance, such improvement can be achieved by rewiring a small percentage of the existing edges (e.g., \cite{Schneider11}) or adding a small number of edges to the graph (e.g., \cite{Beygelzimer05,Abbas12}). 

Another group of studies consider the explicit construction of expanders. Expanders can be constructed via graph operations such as zig-zag product (e.g., \cite{Reingold02,Capalbo02}), or derandomized graph squaring \cite{Rozenman05}. Furthermore, for any $m \in \mathbb{N}$ such that $m-1$ is a prime power, an explicit algebraic construction method for a family of $m$-regular expanders, i.e. Ramanujan graphs, was presented in \cite{Morgenstern94}. In \cite{Olfati07}, Watts-Strogatz small-word networks are transformed into quasi Ramanujan graphs by rewiring some of the edges.

Expanders can also be built as random $m$-regular graphs. A detailed survey of the various models of random regular graphs as well as their properties can be found in \cite{Wormald99} and the references therein. As $n$ goes to infinity, almost every $m$-regular graph has an algebraic connectivity arbitrarily close to $m-2\sqrt{m-1}$ for $m \geq 3$ \cite{Alon86,Friedman03}. For regular graphs, high algebraic connectivity implies high expansion ratios (e.g., \cite{Mohar89}). Hence, for any fixed $m \geq 3$, a random $m$-regular graph has the algebraic connectivity and expansion ratios bounded away from zero, even if the network size is arbitrarily large.

A random $m$-regular graph with $n$ nodes can be constructed by generating $m$ copies for each node, picking a uniform random perfect matching on the $nm$ copies, and connecting any two nodes if the matching contains an edge between their copies (e.g., \cite{Bollobas80,Steger99}). In \cite{Law03}, the authors present a distributed scheme for incrementally building random $2m$-regular multi-graphs with $m$ Hamiltonian cycles. Alternatively, some graph processes may be designed to transform an initial $m$-regular graph into a random $m$-regular graph by inducing a Markov chain with a uniform limiting distribution over the set of $m$-regular graphs  (e.g., \cite{Jerrum90, Mahlmann05}). The method in this paper is also based on designing a graph process with a uniform limiting distribution over the set of $m$-regular graphs. Compared to the similar works in the literature, the proposed scheme is applicable to the most generic case, and it is decentralized. The initial graph is not required to satisfy some strong properties such as being regular or having an integer average degree. Furthermore, the global transformation is achieved via only some local graph operations.

\section{ Preliminaries}
\label{prelim}


 
An undirected graph, $\mathcal{G}=(V,E)$, consists of a set of nodes, $V$, and a set of edges, $E$, given by unordered pairs of nodes. A graph is connected if there exists a path between any pair of nodes. A path is a sequence of nodes such that an edge exists between any two consecutive nodes in  the sequence. Any two nodes are said to be adjacent if an edge exists between them. We refer to the set of nodes adjacent to any node, $i \in V$, as its neighborhood, $\mathcal{N}_i$, defined as
\begin{equation}
\label{neigh}
\mathcal{N}_i= \{j \mid (i,j) \in E\}.
\end{equation}  

For any node $i$, the number of nodes in its neighborhood is called its degree, $d_i$, i.e., 
\begin{equation}
\label{deg}
d_i=|\mathcal{N}_i|,
\end{equation}
where $|\mathcal{N}_i|$ denotes the cardinality of $\mathcal{N}_i$. For any graph $\mathcal{G}$, we use $\delta(\mathcal{G})$, $\Delta(\mathcal{G})$ and $\bar{d}(\mathcal{G})$ to denote the minimum, the maximum, and the average degrees, respectively. We refer to the difference of the maximum and the minimum node degrees in a graph as the degree range, $f(\mathcal{G})$, i.e. 
\begin{equation}
\label{feq}
f(\mathcal{G})=\Delta(\mathcal{G}) -\delta(\mathcal{G}).
\end{equation}

For any undirected graph, $\mathcal{G}=(V,E)$, the graph Laplacian, $L$, is a symmetric matrix whose entries are given as

\begin{equation}
\label{Lapl}
[L]_{ij}=\left\{\begin{array}{ll}\;d_i&\mbox{ if } 
i=j,\\-1&\mbox{ if  }(i,j)\in E, \\ \;0 & \mbox{ otherwise. }\end{array}\right.
\end{equation}
The second-smallest eigenvalue of the graph Laplacian is known as the algebraic connectivity of the graph, $\alpha(\mathcal{G})$. 



%
%

Local graph transformations can be represented using the framework of graph grammars (e.g., \cite{Klavins06}). A grammar, $\Phi$, is a set of rules, where each rule is defined as a label-dependent graph transformation. Each rule is represented as an ordered pair of labeled graphs, $r=(g_l, g_r)$, where the labels represent the node states. Graph grammars operate on labeled graphs. A labeled graph, $\mathcal{G}=(V,E,l)$, consists of a node set, $V$, an edge set, $E$, and a labeling function, $l: V \mapsto \Sigma$, where $\Sigma$ is the set of feasible node labels. A rule is said to be applicable to a labeled graph, $\mathcal{G}=(V,E,l)$, if $\mathcal{G}$ has a subgraph isomorphic to $g_l$, i.e. if there is a bijection, which preserves node labels and edges, between $g_l$ and a subgraph of $\mathcal{G}$. A rule, $r=(g_l, g_r)$, transforms any graph isomorphic to $g_l$ to a graph isomorphic to $g_r$. A labeled initial graph, $\mathcal{G}(0)$, along with a grammar, $\Phi$, defines a non-deterministic system represented as $(\mathcal{G}(0),\Phi )$.

\section{Decentralized Formation of Random Regular Graphs}
\label{scheme}

\subsection{Problem Formulation}
Motivated by the connectivity properties of almost every $m$-regular graph for $m\geq3$, this paper is focused on the following problem: Assume that a multi-agent network is initialized with an arbitrary connected interaction graph. How can the agents reconfigure their links locally in a decentralized fashion such that the resulting dynamics transform the interaction graph into a connected random regular graph with a similar sparsity as the initial graph? Such a transformation is illustrated in Fig. \ref{first_obj}.





\begin{figure}[htb]
\includegraphics[scale=0.25]{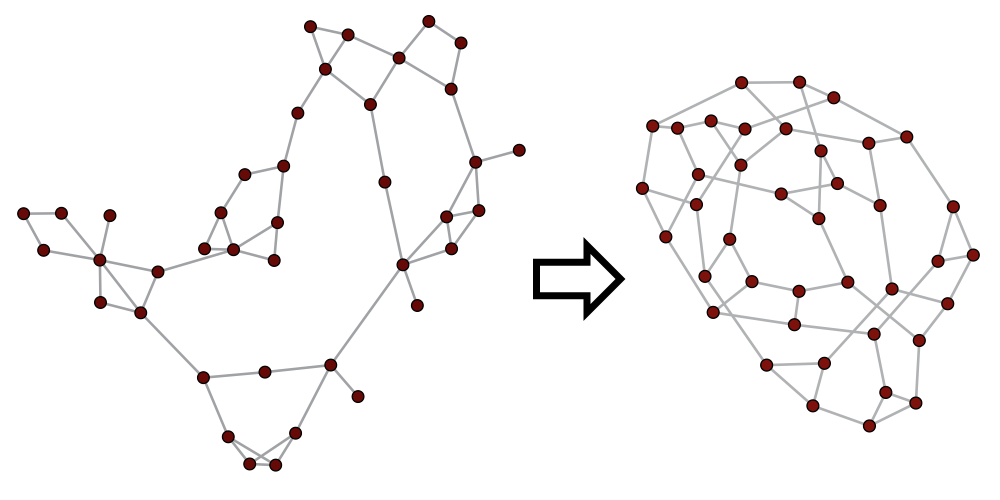}
\put(-180,-5){(a)}
\put(-60,-5){(b)}
\caption{A fragile interaction graph (a) with an average degree 2.8 is transformed into a random 3-regular graph (b). The graphs have similar sparsities, whereas the structure in (b) has a significantly improved connectivity. }
\label{first_obj}
\end{figure}


In order to transform the interaction graph into a random regular graph, the agents need to pursue three global objectives: 1) balance the degree distribution, 2) randomize the links, and 3) drive the average degree to an integer close to its initial value. Furthermore, the agents should ensure that the graph remains connected as they modify their links. This paper presents a set of locally applicable graph reconfiguration rules to pursue all of these global objectives in parallel. 
In the remainder of this section, we will incrementally build the proposed scheme using the framework of graph grammars. 

\subsection{Degree Balancing}

As the first step towards building random regular graphs, we present a local rule for balancing the degree distribution in a network while maintaining the connectivity and the total number of edges. The degree balancing task can be considered as a quantized consensus problem (e.g., \cite{Kashyap07,Nedic09}), where each local update needs to be realized via some local changes in the structure of the graph. In this part, we design a single-rule grammar, $\Phi_{R}$, for balancing the degree distribution. In $\Phi_{R}$, each node is labeled with its degree, i.e. 

\begin{equation}
\label{label_degree}
l(i)=d_i, \text{\;\;}\forall i\in V.
\end{equation}
The proposed grammar, $\Phi_R$, is defined as 
$$ \includegraphics[scale=0.40]{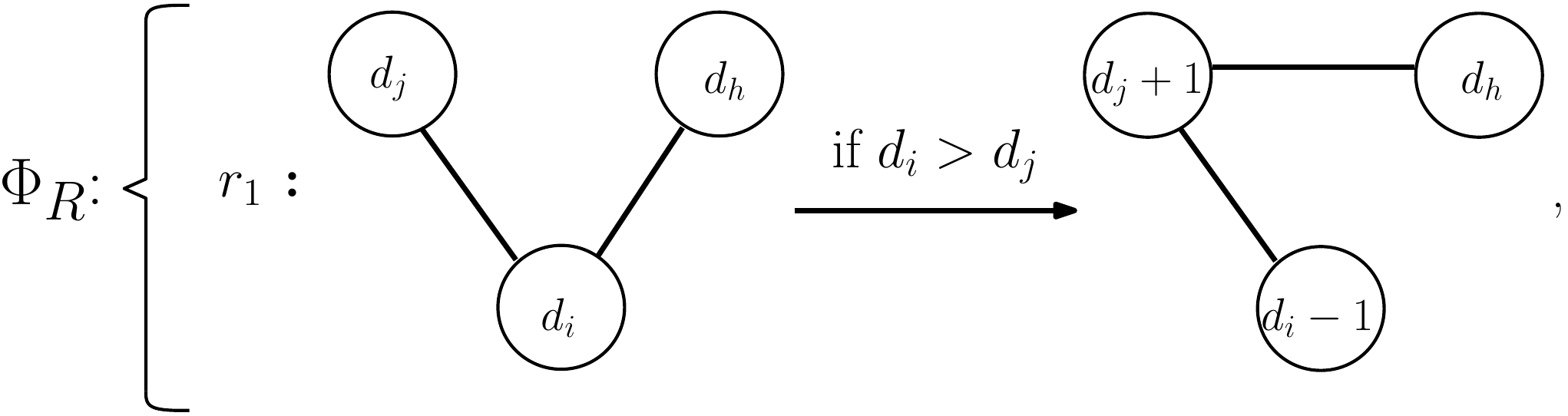}$$
where $d_i$, $d_j$, and $d_h$ denote the degrees of the corresponding nodes. 

In accordance with $\Phi_R$, nodes behave as follows: Let $i$ and $j$ be any pair of adjacent nodes. If $d_j<d_i$, then a new link is formed between $j$ and an arbitrary neighbor of $i$, say $h$, that is not currently linked with $j$. At the same time, the link between $i$ and $h$ is terminated to maintain the overall sparsity. Note that if $d_j<d_i$, then $i$ has at least one such exclusive neighbor, $h$. Furthermore, if a connected graph is not regular then there always exists at least one pair of adjacent nodes, $i$ and $j$, such that $d_i>d_j$. Hence, any connected graph is stationary (i.e., $\Phi_R$ is not applicable anywhere on the graph) if and only if it is a regular graph. Moreover, any concurrent application of $\Phi_R$ on a connected graph, $\mathcal {G}$, maintains the connectivity and the average degree of $\mathcal{G}$. 


A detailed analysis of the dynamics induced by $\Phi_R$ can be found in \cite{Yasin13CDC}, where it is shown that the degree range, $f(\mathcal{G})$, monotonically decreases and converges to its minimum feasible value under $\Phi_R$. As such, the graph converges to a regular graph (i.e., $f(\mathcal{G})=0$) if the average degree is an integer. Otherwise, $f(\mathcal{G})$ converges to 1. For the sake of brevity, we skip the details of this analysis and only provide its main result in Theorem \ref{degreg}.


 \begin{theorem}\label{degreg}
Let $\mathcal{G}(0)$ be a connected graph and let $\tau =\{\mathcal{G}(0), \mathcal{G}(1), \hdots\}$ be a feasible trajectory of $(\mathcal{G}(0), \Phi_R)$. Then, $\tau_f=\{f(\mathcal{G}(0)), f(\mathcal{G}(1)), \hdots\}$, almost surely converges to an integer, $\tau_f^*$, such that
\begin{equation}
\label{taufconv}
\tau_f^*=\left\{\begin{array}{ll}0&\mbox{ if } 
\bar{d}(\mathcal{G}(0))\in \mathbb{N} \\ 1&\mbox{ otherwise, }\end{array}\right.
\end{equation}
where $\bar{d}(\mathcal{G}(0))$ denotes the average degree of $\mathcal{G}(0)$.
\end{theorem}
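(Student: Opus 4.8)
The plan is to exhibit a Lyapunov function for $\Phi_R$. Consider the potential $\Phi(\mathcal{G})=\sum_{i\in V}d_i^2$ (equivalently, up to constants, the variance of the degree sequence, since $\sum_i d_i=2|E|$ is invariant). A direct computation shows that an application of $\Phi_R$ to an adjacent pair with degrees $d_i>d_j$ changes $\Phi$ by $-2(d_i-d_j-1)$. Hence $\Phi$ is non-increasing along every feasible trajectory; it strictly decreases precisely when the acting pair satisfies $d_i-d_j\ge 2$; and when $d_i-d_j=1$ the application merely interchanges the degrees of the two acting nodes, leaving the degree multiset of $\mathcal{G}$ unchanged. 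Since the trajectory stays in the finite set of connected graphs on $V$ with $|E|$ edges, $\Phi(\mathcal{G}(t))$ is a non-increasing integer sequence, bounded below, hence eventually constant; from that time on only difference-one applications occur, so the degree multiset — and therefore $f(\mathcal{G}(t))$ — is eventually constant. Thus $\tau_f$ converges to some (a priori random) integer $C$, and it remains to show $C=\tau_f^*$.

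Two elementary observations pin down the candidates for $C$. If $f(\mathcal{G})=0$ then $\mathcal{G}$ is regular and $\bar d(\mathcal{G})\in\mathbb{N}$; if $f(\mathcal{G})=1$ then the degrees take exactly the two consecutive values $\delta,\delta+1$ with both attained, so $\bar d(\mathcal{G})=\delta+p/n$ with $1\le p\le n-1$ (writing $n=|V|$), which is not an integer; and if $\mathcal{G}$ is not regular then $f(\mathcal{G})\ge 1$. Consequently, once we know that $C\le 1$ almost surely, the value of $C$ is forced: it is $0$ when $\bar d(\mathcal{G}(0))\in\mathbb{N}$ and $1$ otherwise, which is exactly $\tau_f^*$. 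So the theorem reduces to ruling out $C\ge 2$.

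The core step is an \emph{escape lemma}: from every connected graph $\mathcal{G}$ with $f(\mathcal{G})\ge 2$ there is a sequence of at most $|V|-1$ applications of $\Phi_R$ whose endpoint has strictly smaller $\Phi$. To prove it, choose a maximum-degree node $u$ and a node $v$ with $d_v=\delta\le\Delta-2$ at minimum distance $r$ in $\mathcal{G}$. If $r=1$, apply $\Phi_R$ to the edge $(u,v)$: $u$ has at least $\Delta-\delta\ge 2$ neighbors not adjacent to $v$, so the rule is applicable, and $\Phi$ drops. If $r\ge 2$, fix a shortest path $u=w_0,w_1,\dots,w_r=v$; minimality of $r$ forces $d_{w_t}=\Delta-1$ for $1\le t\le r-1$ (a path node of degree $\Delta$ would be a max-degree node closer to $v$, and one of degree $\le\Delta-2$ would be closer to $u$). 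Now I would ``transport'' the maximum degree along the path by applying $\Phi_R$ to $(w_t,w_{t+1})$ for $t=0,1,\dots,r-2$, each time using some neighbor of $w_t$ not adjacent to $w_{t+1}$ as the third node. Since $\Phi_R$ never changes the degree of its third node and never deletes the edge joining its first two nodes, an index check shows that just before the $t$-th such application $w_t$ has degree $\Delta$, $w_{t+1}$ still has degree $\Delta-1$, and the edge $(w_t,w_{t+1})$ is still present; each application is thus a difference-one swap and leaves $d_v$ untouched. After the last one, $w_{r-1}$ has degree $\Delta$ and is still adjacent to $v$ with $d_v\le\Delta-2$, so a final application of $\Phi_R$ to $(w_{r-1},v)$ strictly decreases $\Phi$ (a total of $r\le|V|-1$ applications).

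Given the escape lemma, the conclusion follows from the standard fairness property of the model: every continuously applicable rule instance is eventually applied, so any fixed bounded escape sequence is executed with probability bounded away from zero. On the event $\{C\ge 2\}$, after some time the degree multiset is frozen with degree range $\ge 2$ and $\Phi$ is constant; but the escape lemma supplies, at each subsequent instant, a bounded escape sequence, which is therefore executed almost surely, contradicting the constancy of $\Phi$. Hence $C\le 1$ almost surely, and, by the second paragraph, $C=\tau_f^*$. I expect the main obstacle to be the bookkeeping inside the escape lemma — certifying that every transport step is a legal application of $\Phi_R$ and that it disturbs neither the degrees $d_{w_{t+1}},\dots,d_{w_r}$ nor the edges needed later, even though the ``backward'' part of the path may get rewired; the observation that $\Phi_R$ fixes the third node's degree and touches only an edge incident to the acting pair is what keeps this under control. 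A secondary point needing care is the passage from ``a positive-probability escape exists'' to the almost-sure statement, which rests on the fairness assumption and a Borel--Cantelli-type argument.
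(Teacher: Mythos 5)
Your argument is essentially correct, but note that this paper contains no proof of Theorem \ref{degreg} to compare against: it explicitly defers the analysis to the earlier reference \cite{Yasin13CDC}, summarizing only that the degree range $f(\mathcal{G})$ is monotonically non-increasing and converges to its minimum feasible value. Your route is therefore a self-contained substitute, and it is the standard quantized-consensus one (cf.\ \cite{Kashyap07}): the quadratic potential $\sum_i d_i^2$ is non-increasing (your computation of the decrement $-2(d_i-d_j-1)$ is right, and it sums correctly over concurrent applications because concurrent rule instances act on disjoint node sets), it freezes the degree multiset once constant, and the parity observation on $\bar d$ then pins the limit of $f$ to $\tau_f^*$ once $C\le 1$ is established. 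The escape lemma is the real content and your bookkeeping goes through: intermediate nodes on a shortest path between a degree-$\Delta$ node and a low-degree node have degree exactly $\Delta-1$, each transport step is applicable by the counting argument $\Delta-1>\Delta-2$, the third node's degree is fixed, and no edge or degree needed later can be disturbed because a chord $w_t\sim w_s$ with $s\ge t+2$ would contradict shortest-path minimality. Two small points deserve tightening. First, you should define $r$ as the minimum distance from a maximum-degree node to a node of degree at most $\Delta-2$, not to a node of degree exactly $\delta$; otherwise an intermediate path node of degree $\Delta-2$ (when $\delta<\Delta-2$) would not contradict your minimality and the claim $d_{w_t}=\Delta-1$ could fail. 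The fix is cosmetic since such a node can simply replace $v$. Second, the closing step should not be phrased as a ``fairness'' axiom: the correct statement, which you also gesture at, is that in the randomized execution every feasible bounded sequence has probability bounded below uniformly over the finite state space, so the non-increasing potential must a.s.\ leave $\{f\ge 2\}$ by Borel--Cantelli; this is consistent with the paper's stated convention that every feasible alternative receives positive probability. Compared with the cited proof's direct tracking of $f$, your potential buys a cleaner monotonicity statement (strict decrease on every gap-$\ge 2$ application) at the cost of the extra step relating constancy of $\Phi$ to constancy of $f$.
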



The grammar $\Phi_R$ transforms any connected graph with an integer average degree, $k  \in \mathbb{N} $, into a connected $k$-regular graph. Note that, although almost every $k$-regular graph is an expander for $k\geq3$, $\Phi_R$ may still result in a rare configuration with an arbitrarily small expansion rate. For instance, both of the 3-regular graphs in Fig. \ref{goodandbad} are stationary under $\Phi_R$ whereas the graph in Fig. \ref{goodandbad}a can have half of the network disconnected due the failure of a single edge or a single node. The probability of converging to an undesired equilibrium such as the graph in Fig. \ref{goodandbad}a depends on the initial graph.  In the next part, we will build on $\Phi_{R}$ to avoid such undesired outcomes.

\begin{figure}[htb]
\begin{center}
\includegraphics[trim = 5mm 20mm 0mm 4mm,scale=0.27]{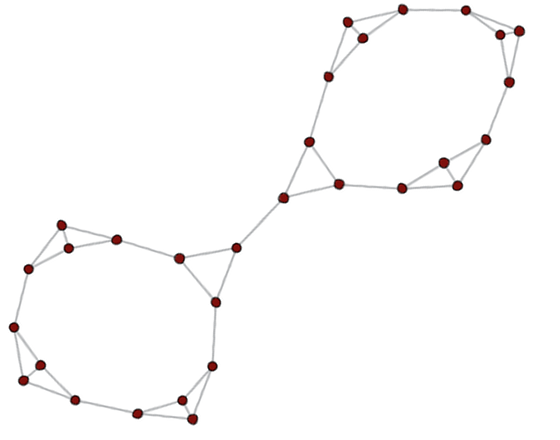}
\includegraphics[trim = 0mm 0mm 0mm 4mm,scale=0.55]{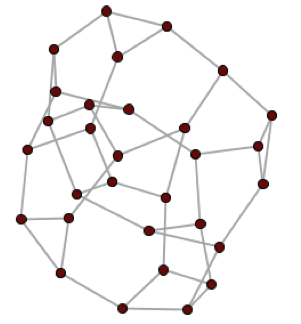}
\put(-170,-20){\small (a)}
\put(-50,-20){\small (b)}
\caption{A poorly-connected 3-regular graph (a) and a well-connected 3-regular graph (b).  }
\label{goodandbad}
\end{center}
\end{figure}

\subsection{Link Randomization}
In order to ensure that the interaction graph does not converge to a poorly-connected configuration, we extend  $\Phi_{R}$ by adding a second rule for randomizing the links between the agents. The purpose of this additional rule is to induce a uniform limiting distribution over the set of all connected $k$-regular graphs instead of having the network converge to an element of that set. As such, the resulting graph will be a random $k$-regular graph. A locally applicable method was introduced in \cite{Mahlmann05} for transforming regular graphs into random regular graphs. We define an additional rule for $\Phi_R$ based on this method. The resulting grammar, $\Phi_{RR}$, uses the same labels as $\Phi_R$, and it is defined as 

$$
\includegraphics[scale=0.39]{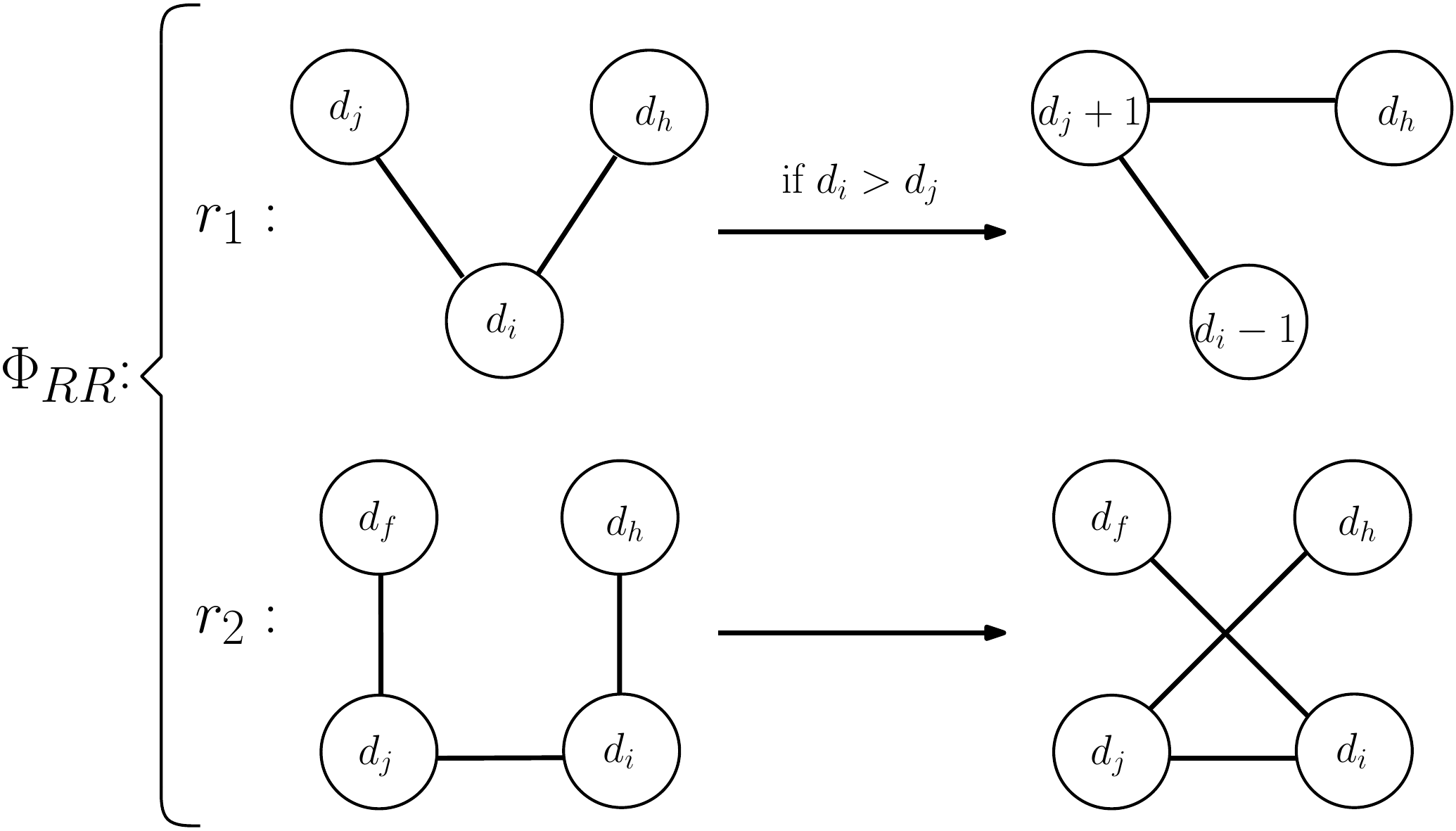}
$$
where $d_i$, $d_j$, $d_h$, and $d_f$ denote the degrees of the corresponding nodes. In accordance with $\Phi_{RR}$, if a node has more links than one of its neighbors has, then it rewires one of its other neighbors to its less-connected neighbor ($r_1$). Also, adjacent nodes exchange their exclusive neighbors ($r_2$). 

%

Both rules in $\Phi_{RR}$ maintain the number of edges and the connectivity.  Unlike $\Phi_R$, a connected graph is never stationary under $\Phi_{RR}$ (unless it is a complete graph). Once a regular graph is reached, $r_1$ is not applicable anymore, but the agents keep randomizing their links via $r_2$. 


There may be many feasible concurrent applications of $\Phi_{RR}$ on an interaction graph. In such cases, the agents need to execute one of the alternatives randomly. These randomizations do not have to be uniform, and assigning each alternative a non-zero probability is sufficient for the desired limiting behavior to emerge. We provide Algorithm \ref{RandRegAlg} as a such distributed implementation that leads to a uniform limiting distribution over the set of connected $k$-regular graphs. 

Algorithm \ref{RandRegAlg} is memoryless since each iteration only depends on the current graph, and the probability of any feasible transition is independent of the past. As such, it induces a Markov chain, $P_{RR}$, over the set of connected graphs with the same average degree as the initial graph, i.e.

\begin{equation}
\label{statespace}
\mathbb{G}_{n,k}=\{ \mathcal{G}=(V,E) \mid |V| = n, |E|= \frac{kn}{2} \},
\end{equation}
where $k= \bar{d}(\mathcal{G}(0))$. Let $\mathbb{G}^0_{n,k} \subseteq \mathbb{G}_{n,k}$ be the set of connected $k$-regular graphs. Note that if $k \in \mathbb{N}$, then $\mathbb{G}^0_{n,k} \neq \emptyset $. A thorough analysis of the dynamics induced by Algorithm \ref{RandRegAlg} along with the following result can be found in \cite{Yasin14CDC}. 

\begin{theorem}
\label{limdist} Let $\mathbb{G}_{n,k}$ satisfy $k \in \mathbb{N}$. Then, $P_{RR}$ has a unique limiting distribution, $\mu^*$, satisfying 
\begin{equation}
\label{limdisteq}
\mu^*(\mathcal{G})= \left\{\begin{array}{ll}1/|\mathbb{G}^0_{n,k}|&\mbox{ if $\mathcal{G} \in \mathbb{G}^0_{n,k}$}, 
 \\\;\;\; 0&\mbox{ otherwise. }\end{array}\right.
\end{equation}
\end{theorem}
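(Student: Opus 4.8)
The plan is to prove the statement in three stages: (i) show that the set $\mathbb{G}^0_{n,k}$ of connected $k$-regular graphs is closed under $P_{RR}$; (ii) show that every connected non-regular graph in $\mathbb{G}_{n,k}$ is transient; and (iii) show that the restriction of $P_{RR}$ to $\mathbb{G}^0_{n,k}$ is an irreducible, aperiodic Markov chain whose transition matrix is symmetric, hence doubly stochastic. Together these facts force the chain to have a single recurrent class, $\mathbb{G}^0_{n,k}$, on which the unique stationary and limiting distribution is uniform, while all other states receive limiting probability zero; this is exactly \eqref{limdisteq}. The hypothesis $k\in\mathbb{N}$ enters twice: it guarantees $\mathbb{G}^0_{n,k}\neq\emptyset$, and it is what lets us apply Theorem \ref{degreg}.

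For stage (i), note that on a $k$-regular graph no node has strictly more links than a neighbor, so $r_1$ is never applicable; only $r_2$ can fire, and $r_2$ leaves every node degree unchanged and preserves connectivity (as already observed for $\Phi_{RR}$). Hence once the chain enters $\mathbb{G}^0_{n,k}$ it never leaves. For stage (ii), I would use Theorem \ref{degreg}: the $r_1$ component of $\Phi_{RR}$ is exactly $\Phi_R$, and when $\bar d(\mathcal{G}(0))=k\in\mathbb{N}$ a feasible trajectory of $(\mathcal{G}(0),\Phi_R)$ drives the degree range $f(\mathcal{G})$ to $0$ almost surely, i.e.\ reaches a $k$-regular graph. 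In particular, from any connected non-regular $\mathcal{G}\in\mathbb{G}_{n,k}$ there is a finite sequence of $r_1$ applications ending in $\mathbb{G}^0_{n,k}$, each of which has positive probability under Algorithm \ref{RandRegAlg}; combined with (i), this makes every non-regular state transient and localizes $\mu^*$ on $\mathbb{G}^0_{n,k}$.

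Stage (iii) is the heart of the argument. Irreducibility on $\mathbb{G}^0_{n,k}$ amounts to showing that any two connected $k$-regular graphs on $n$ nodes are joined by a sequence of $r_2$-type edge switches that never destroys connectivity; this is precisely the connectivity property of the flip/switch chain on regular graphs established in \cite{Mahlmann05}, which I would invoke and adapt to the labeled-graph setting used here. Aperiodicity follows from the existence of self-loops: Algorithm \ref{RandRegAlg} assigns positive probability to the event that no rule is applied in an iteration, so $P_{RR}(\mathcal{G},\mathcal{G})>0$ for every $\mathcal{G}\in\mathbb{G}^0_{n,k}$. Symmetry follows because every elementary $r_2$ move has an inverse $r_2$ move of the same local form, and in Algorithm \ref{RandRegAlg} the probability assigned to a move depends only on the isomorphism type of the local neighborhood it touches, so $P_{RR}(\mathcal{G},\mathcal{G}')=P_{RR}(\mathcal{G}',\mathcal{G})$; for the concurrent version one argues that a simultaneous update is a disjoint union of non-interfering local moves whose reverse is again a valid simultaneous update with the same probability. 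A symmetric transition matrix is doubly stochastic, so the uniform distribution on $\mathbb{G}^0_{n,k}$ is stationary there, and by irreducibility and aperiodicity it is the unique limiting distribution on that class.

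The step I expect to be the main obstacle is establishing irreducibility of the restricted chain subject to the connectivity constraint — verifying that the $r_2$ switches alone suffice to move between any two connected $k$-regular graphs without ever passing through a disconnected graph. The symmetry bookkeeping under concurrent rule applications is also delicate: one must confirm that the algorithm's conflict resolution excludes interfering moves and that the surviving concurrent transitions remain pairwise symmetric, so that the doubly stochastic property is not spoiled by the parallelism.
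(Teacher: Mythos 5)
Your proposal is correct and follows essentially the same route as the paper: although the paper defers the proof of this particular theorem to \cite{Yasin14CDC}, its in-text analysis of the generalized chain $P^*$ (Lemmas \ref{ComClass3R}--\ref{limincom} and Theorem \ref{limdist3R}) uses exactly your three-stage decomposition --- closedness of the regular class, almost-sure absorption from non-regular states, and irreducibility, aperiodicity, and symmetry (hence double stochasticity) of the restricted chain yielding the uniform limiting distribution, with irreducibility of the $r_2$-switch moves imported from \cite{Mahlmann05}. The two delicate points you flag (switch-connectivity of connected regular graphs, and reversibility of concurrent updates) are precisely the ones the paper handles via the cited switch-chain result and the explicit reverse neighbor-selection construction in Lemma \ref{SymReg3R}.
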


\begin{spacing}{1.5}
\begin{center}
\resizebox{6.7cm}{!}{
\begin{alg}{|l|}
\hline
\label{RandRegAlg}
\textbf{Algorithm \Roman{algno}}: Distributed Implementation of $\Phi_{RR}$ \\
\hline
\mbox{\small $\;1:\;$}\textbf{initialize:} $\mathcal{G}=(V,E)$ connected, $\epsilon \in (0,1)$ \\
\mbox{\small $\;2:\;$}\textbf{repeat} \\
\mbox{\small $\;3:\;$}\hspace{0.2cm} Each agent, $i$, is active with probability $1-\epsilon$. \\
\mbox{\small $\;4:\;$}\hspace{0.2cm} Each active $i$ picks a random $j \in \mathcal{N}_i$. \\
\mbox{\small $\;5:\;$}\hspace{0.2cm} For each $i$, $R_i = \{i' \in \mathcal{N}_i \mid i' \text{ picked } i\}$. \\
\mbox{\small $\;6:\;$}\hspace{0.2cm} $\textbf{for}\hspace{0.1cm}  $(each $(i,j)$ s.t. $i \in R_j, j \in R_i$, $d_i\geq d_j$)\\
\mbox{\small $\;7:\;$}\hspace{0.4cm}$\max\{i,j\}$ picks a random $r \in \Phi_{RR}$.\\
\mbox{\small $\;8:\;$}\hspace{0.4cm} $\textbf{if}\hspace{0.1cm}  $($r=r_1$, $d_i>d_j$, $|R_i| \geq 2$)\\
\mbox{\small $\;9:\;$}\hspace{0.6cm} $i$ picks a random $h\in R_i \setminus \{j\}$.\\
\mbox{\small $10:\;$}\hspace{0.6cm} $\textbf{if} \hspace{0.1cm}  $($(j,h) \notin E$)\\
\mbox{\small $11:\;$}\hspace{0.8cm} $E = (E \setminus \{(i,h)\}) \cup \{(j,h)\}$.\\
\mbox{\small $12:\;$}\hspace{0.6cm} \textbf{end if}\\
\mbox{\small $13:\;$}\hspace{0.4cm} $\textbf{else if} \hspace{0.1cm}  $($r=r_2$, $|R_i|\geq 2$, $|R_j| \geq 2$)\\
\mbox{\small $14:\;$}\hspace{0.6cm} $i$ picks a random $h\in R_i \setminus \{j\}$.\\
\mbox{\small $15:\;$}\hspace{0.6cm} $j$ picks a random $f\in R_j\setminus \{i\}$.\\
\mbox{\small $16:\;$}\hspace{0.6cm} $\textbf{if} \hspace{0.1cm}  $($(i,f) \notin E$, $(j,h) \notin E$)\\
\mbox{\small $17:\;$}\hspace{0.8cm} $E = (E \setminus \{(i,h), (j,f)\}) \cup \{(i,f),(j,h)\}$.\\
\mbox{\small $18:\;$}\hspace{0.6cm} \textbf{end if}\\


\mbox{\small $19:\;$}\hspace{0.4cm} \textbf{end if}\\
\mbox{\small $20:\;$}\hspace{0.2cm} \textbf{end for}\\
\mbox{\small $21:\;$}\textbf{end repeat}\\
\hline
\end{alg}
}
\end{center}
\end{spacing}


In light of Theorem \ref{limdist}, any connected initial graph with an integer average degree, $k \in \mathbb{N}$, can be transformed into a connected random $k$-regular graph via $\Phi_{RR}$. For any $k \geq 3$, such graphs have their algebraic connectivity and expansion ratios bounded away from zero, even when the network size is arbitrarily large. If $k \notin \mathbb{N}$, then the degree range converges to 1 via $r_1$, as given in Theorem \ref{degreg}. In that case, the limiting graph can be loosely described as a random almost-regular graph.  While such graphs have many structural similarities to random regular graphs, there is not an exact characterization of their expansion properties. In the next part, we will build on $\Phi_{RR}$ to obtain a grammar that produces random regular graphs, even when $k \notin \mathbb{N}$.

\subsection{Average Degree Manipulation}


Both $\Phi_R$ and $\Phi_{RR}$ were designed to maintain the overall network sparsity. In order to obtain random regular graphs even when the initial graph has a non-integer average degree,  $k \notin \mathbb{N}$, the agents need to manipulate the total number of edges. To this end, the sparsity constraint should be relaxed to allow for occasional increases and decreases in the number of edges as long as the graph is not regular and the average degree is within some proximity of its initial value. Also, the feasible range of the number of edges should ensure that the average degree can reach an integer value. Since an $m$-regular graph with an odd number of nodes does not exist if $m$ is odd, an even integer should always exist within the feasible range of the average degree. In this part, we derive such a grammar, $\Phi^*$, by building on $\Phi_{RR}$. 





 
 In  $\Phi^*$, each node, $i\in V$, has a label, $l(i)$, consisting of two elements. The first entry of the label is equal to the degree of the node, just as in $\Phi_R$ and $\Phi_{RR}$. The second entry of $l(i)$ is a binary flag that denotes whether the corresponding agent is allowed to locally increase or decrease the number of edges in the network. A node, $i$, can add an extra edge to the graph only if $w_i=0$, and it can remove an edge only if $w_i=1$. As such, the node labels are defined as
\begin{equation}
\label{label_dw}
l(i)=\left[\begin{array}{l} d_i \\ w_i \end{array}\right], \forall i\in V,
\end{equation}
where  $d_i$ denotes the degree of $i$, and $w_i \in \{0,1\}$. For any graph, the degree of each node is already encoded through the edge set, $E$. Hence, for the sake of simplicity, a labeled graph will be denoted as $\mathcal{G}=(V,E,w)$ in the sequel.
Using the node labels given in (\ref{label_dw}), the proposed grammar, $\Phi^*= \{r_1, r_2, r_3, r_4\}$, is defined as 

$$
\includegraphics[scale=0.37]{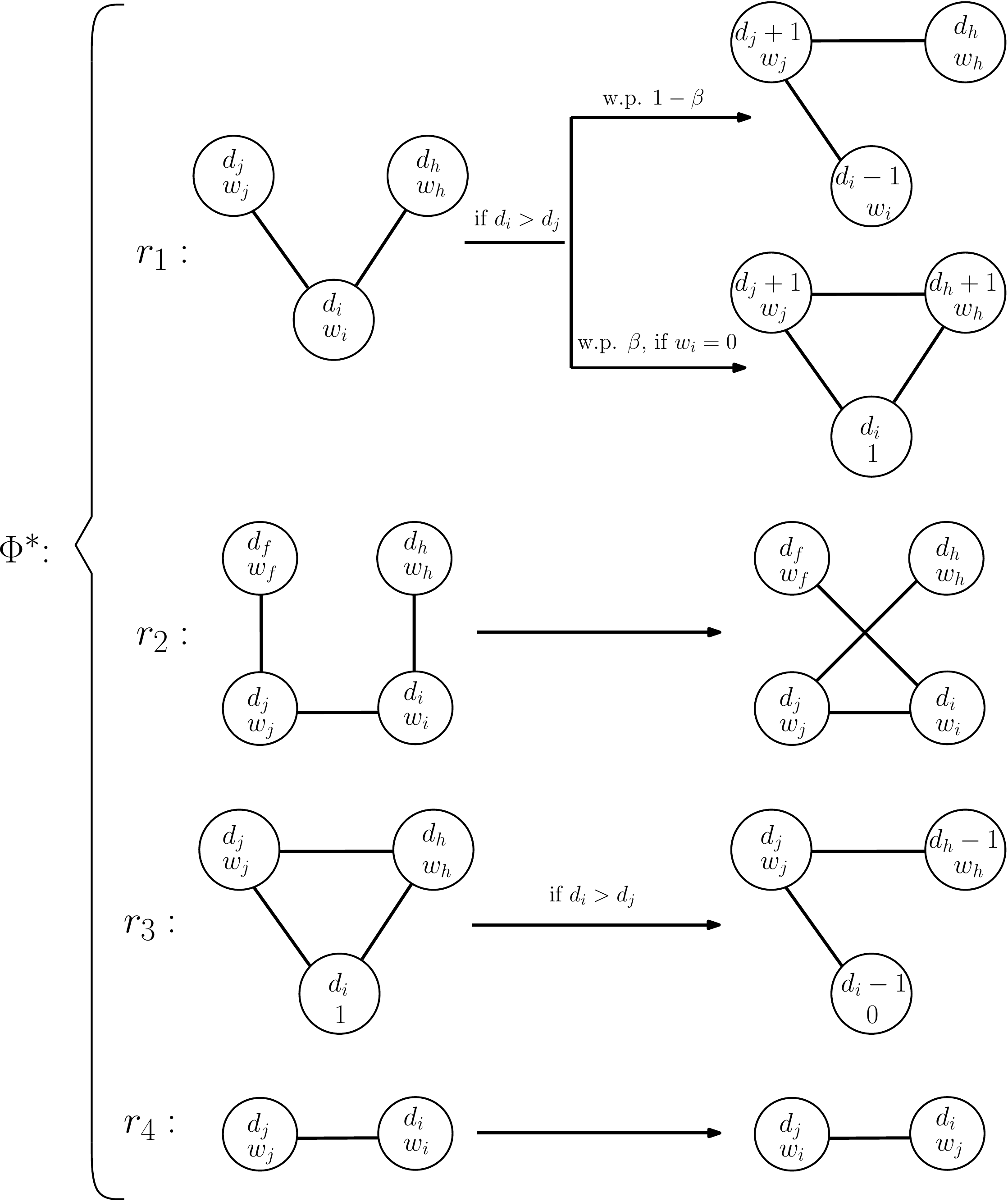}
$$

The rules in $\Phi^*$ can be interpreted as follows: The first rule, $r_1$, is a variant of the degree balancing rule in $\Phi_{RR}$. While $r_1$ induces a transformation identical to the degree balancing rule in $\Phi_{RR}$ with a probability, $1-\beta$, there is also a nonzero probability, $\beta$, that if the higher degree node has its binary flag equal to 0, then it maintains its link with the neighbor introduced to its lower degree neighbor. Note that the total number of edges increases in that case, and the binary flag of the corresponding higher degree node becomes 1. The second rule, $r_2$, is the same link randomization rule as in $\Phi_{RR}$. The third rule, $r_3$, is the edge removal rule which breaks one edge of a triangle if all of its nodes do not have the same degree and one of the higher degree nodes has its binary flag equal to 1. Note that the total number of edges decreases in that case, and the binary flag of the corresponding higher degree node becomes 0. The final rule, $r_4$, is for exchanging the binary flags among adjacent nodes. The purpose of $r_4$ is to ensure that the edge additions and removals will be applicable as long as they are needed to drive the average degree to an integer. In the remainder of this section, the dynamics induced by $\Phi^*$ is analyzed. 



\begin{lem}
\label{cons_3r} Graph connectivity is maintained under any concurrent application of $\Phi^*$. 
\end{lem}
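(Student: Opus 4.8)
The plan is to argue rule-by-rule. Since a concurrent application of $\Phi^*$ consists of several non-overlapping local operations applied simultaneously, and each operation touches only a bounded neighborhood, the first step is to reduce to the single-rule case: I would observe that the subgraphs matched by distinct concurrent applications are vertex-disjoint on the nodes whose incident edge sets change (this is exactly how Algorithm \ref{RandRegAlg} and its analogue for $\Phi^*$ select applications, via the mutual-picking sets $R_i$), so the net effect on $E$ is the union of the individual edge swaps/additions/removals, and connectivity is preserved by the whole step iff it is preserved by each piece acting alone. Strictly, I would phrase this as: if $\mathcal{G}$ is connected and each individual rule application, applied in isolation, yields a connected graph and does not destroy any path that another simultaneous application relies on, then the concurrent result is connected; the disjointness of the rewired edge sets is what makes this go through.

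Next I would handle the four rules individually. Rule $r_2$ and the probability-$(1-\beta)$ branch of $r_1$ are exactly the rules of $\Phi_{RR}$, for which connectivity preservation was already asserted in the discussion of $\Phi_{RR}$ (\textquotedblleft Both rules in $\Phi_{RR}$ maintain the number of edges and the connectivity\textquotedblright), so I would simply cite that. For the $\beta$-branch of $r_1$: the transformation adds the edge $(j,h)$ while keeping $(i,h)$, i.e.\ it only \emph{adds} an edge, and adding an edge to a connected graph keeps it connected. Rule $r_4$ changes only labels $w_i$, not $E$, so it trivially preserves connectivity. The only substantive case is $r_3$, the edge-removal rule: it deletes one edge of a triangle on nodes, say, $\{i,j,h\}$. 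The key observation is that in a triangle every pair of the three nodes is joined by a path of length $2$ through the third, so deleting edge $(i,j)$ leaves $i$ and $j$ connected via $h$; hence any path in $\mathcal{G}$ that used $(i,j)$ can be rerouted through $h$, and connectivity is preserved. That the deleted edge lies on a triangle is guaranteed by the precondition of $r_3$ (it \textquotedblleft breaks one edge of a triangle\textquotedblright), so this reroute is always available.

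The main obstacle — really the only place where care is needed — is the concurrency reduction, i.e.\ verifying that simultaneous applications do not interact badly: one application might delete an edge $(i,j)$ that another application's reroute path depends on, or two applications might both be rerouting through the same third vertex. I would close this gap by leaning on the locality constraints built into the distributed implementation: each application is \textquotedblleft owned\textquotedblright{} by a matched pair with the mutual-picking property, the flags $w$ and degrees are read at the start of the round, and the nodes/edges modified by one application are excluded from any other by construction, so the triangle used by an $r_3$ application (and the rerouting vertex $h$) is internal to that application and untouched by the rest. Once that disjointness is recorded, the per-rule arguments above combine immediately to give the claim. I would keep the write-up short: state the disjointness, dispatch $r_2$ and $(1-\beta)$-$r_1$ by reference to $\Phi_{RR}$, note $\beta$-$r_1$ only adds an edge and $r_4$ only changes labels, and give the triangle reroute for $r_3$.
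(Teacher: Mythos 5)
Your proposal is correct and rests on the same core mechanism as the paper's proof: every edge deleted by $r_1$, $r_2$, or $r_3$ admits a length-2 detour through the retained edge between the matched pair (for $r_3$, the other two triangle edges), and these detour edges survive the concurrent step because distinct rule applications act on disjoint subgraphs. The paper phrases this as a single path-rerouting procedure applied directly to the concurrent transformation rather than your rule-by-rule decomposition, but the argument is essentially identical.
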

\begin{proof}
Let a connected graph $\mathcal{G}(t)$ be transformed into $\mathcal{G}(t+1)$ via some concurrent application of $\Phi^*$. Since $\mathcal{G}(t)$ is connected, for every pair of nodes, $v,v' \in V$, $\mathcal{G}(t)$ contains a finite path $P$ between $v$ and $v'$. If all the edges traversed in $P$ are maintained in $\mathcal{G}(t+1)$, then $P$ is also a valid path on $\mathcal{G}(t+1)$. Otherwise, any missing edge is removed due to an execution of $r_1$, $r_2$, or $r_3$.  Accordingly, the following procedure can be executed to transform $P$ into a valid path on $\mathcal{G}(t+1)$ between the same terminal nodes as $P$: 

For each edge that is traversed in $P$ but missing in $\mathcal{G}(t+1)$,

1) If the edge was removed via $r_1$ or $r_3$, then replace the corresponding part of the path, $\{i,h\}$, with $\{i,j,h\}$.

2) If the edge was removed via $r_2$, then replace the corresponding part of the path, $\{i,h\}$ (or $\{j,f\}$), with $\{i,j,h\}$ (or $\{j,i,f\}$).
 
Consequently, if $\mathcal{G}(t)$ is connected, then $\mathcal{G}(t+1)$ is also connected.

\end{proof}

Note that, unlike $\Phi_R$ and $\Phi_{RR}$, the the number of edges is not maintained under $\Phi^*$ due to the possible edge additions and removals. However, the number of edges is bounded in an interval through the binary labels, $w$.  

\begin{lem}
\label{edgew} Let $\mathcal{G}=(V,E,w)$ be a graph, and let $\tau$ be any feasible trajectory of $(\mathcal{G}, \Phi^*)$. For any $\mathcal{G}'=(V,E',w') \in \tau$,
 \begin{equation}
\label{edge_dif}
|E'|-|E| = \bold{1}^Tw' - \bold{1}^Tw,
\end{equation}
where $\bold{1}$ is the vector of all ones.
\end{lem}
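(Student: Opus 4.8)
The plan is to reduce the claim to a single transition and then telescope along $\tau$. First I would prove the one-step version: if a connected labeled graph $\mathcal{G}(t)=(V,E(t),w(t))$ is transformed into $\mathcal{G}(t+1)=(V,E(t+1),w(t+1))$ by a concurrent application of $\Phi^*$, then $|E(t+1)|-|E(t)| = \mathbf{1}^Tw(t+1)-\mathbf{1}^Tw(t)$. A concurrent application is a simultaneous execution of finitely many individual rule applications that act on pairwise non-overlapping portions of the graph; hence the change in $|E|$ is the sum of the per-application changes, and the same holds for $\mathbf{1}^Tw$. Thus it suffices to verify, for each rule $r\in\Phi^*=\{r_1,r_2,r_3,r_4\}$, that one application changes $|E|$ and $\mathbf{1}^Tw$ by the same amount.

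Next comes the per-rule check. For $r_2$ (link randomization) and $r_4$ (flag exchange between two adjacent nodes), neither the edge count nor the total flag sum changes, so both increments are $0$. For $r_1$: with probability $1-\beta$ it performs exactly the rewiring of the degree-balancing rule of $\Phi_{RR}$, so $|E|$ and every $w_i$ are unchanged ($0=0$); with probability $\beta$, and only when the higher-degree endpoint has flag $0$, the old edge is kept while the new one is added, so $|E|$ increases by $1$ and that endpoint's flag flips $0\to1$, so $\mathbf{1}^Tw$ increases by $1$ ($1=1$). For $r_3$ (edge removal from a triangle), one edge is deleted, decreasing $|E|$ by $1$, while the higher-degree endpoint's flag flips $1\to0$, decreasing $\mathbf{1}^Tw$ by $1$ ($-1=-1$). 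In every case the increment in $|E|$ equals the increment in $\mathbf{1}^Tw$; summing over the concurrently applied rules gives the one-step identity.

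Finally, writing $\tau$ as $\mathcal{G}=\mathcal{G}(0),\mathcal{G}(1),\dots,\mathcal{G}(s)=\mathcal{G}'$ and telescoping,
\[
|E'|-|E| \;=\; \sum_{t=0}^{s-1}\big(|E(t+1)|-|E(t)|\big) \;=\; \sum_{t=0}^{s-1}\big(\mathbf{1}^Tw(t+1)-\mathbf{1}^Tw(t)\big) \;=\; \mathbf{1}^Tw'-\mathbf{1}^Tw,
\]
which is (\ref{edge_dif}). The only point requiring care is the concurrency bookkeeping in the one-step argument: one must ensure that simultaneously applied rules modify disjoint edge sets and disjoint node (flag) sets, so that the individual edge-count and flag-sum increments simply add without interference. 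This is precisely what the semantics of a concurrent application in the graph-grammar framework — and, concretely, the conflict-free selection of interacting node pairs analogous to Algorithm \ref{RandRegAlg} — guarantees; beyond that, the argument is just the routine per-rule check above.
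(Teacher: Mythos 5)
Your proof is correct and follows essentially the same route as the paper's: a per-rule check that each application of $r_1$, $r_2$, $r_3$, or $r_4$ changes $|E|$ and $\mathbf{1}^Tw$ by the same amount, followed by induction (telescoping) along the trajectory. Your explicit treatment of the concurrency bookkeeping is a slightly more careful version of what the paper leaves implicit, but it is not a different argument.
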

\begin{proof}
%
%
%

Let $\mathcal{G}(t)=(V,E(t),w(t))$ and $\mathcal{G}(t+1)=(V,E(t+1),w(t+1))$ be a pair of consecutive graphs on any trajectory, $\tau$, of $(\mathcal{G}, \Phi^*)$. Note that the number of edges is maintained under $r_2$ and $r_4$, and it can increase by 1 only due to $r_1$ or decrease by 1 only due to $r_3$. If an application of $r_1$ adds an extra edge to the network, then one of the nodes involved in that transformation, $i$, satisfies $w_i(t+1)-w_i(t)=1$, whereas the other two participating nodes maintain their $w$ entries. On the other hand, for each application of $r_3$, one of the nodes involved in that transformation, $i$, satisfies $w_i(t+1)-w_i(t)=-1$ while the other two participating nodes maintain their $w$ entries. Furthermore, any rule application that maintains the number of edges also maintains the sum of $w$ entries of the participating nodes. In particular, any edge rewiring via $r_1$ or any neighbor swapping via $r_2$ maintain the $w$ entries of the participating nodes, whereas any label exchange via  $r_4$ maintains the sum of labels. Hence, $E(t)$ and $E(t+1)$ satisfy
 \begin{equation}
\label{edge_dif2}
|E(t+1)|-|E(t)| = \bold{1}^Tw(t+1) - \bold{1}^Tw(t).
\end{equation}
Using induction, it can be shown that (\ref{edge_dif2}) implies (\ref{edge_dif}) for any $\mathcal{G}'=(V,E',w') \in \tau$.
  \end{proof}

\begin{cor}
\label{edgewcor}Let $\mathcal{G}=(V,E,\bold{0})$ be a graph, and let $\tau$ be any feasible trajectory of $(\mathcal{G}, \Phi^*)$. For any $\mathcal{G}'=(V,E',w') \in \tau$,
 \begin{equation}
\label{edge_b}
|E| \leq |E'| \leq |E|+|V|.
\end{equation}
\end{cor}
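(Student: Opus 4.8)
The plan is to obtain the claim as an immediate consequence of Lemma \ref{edgew}, specialized to the case $w = \mathbf{0}$. First I would invoke Lemma \ref{edgew} with the initial graph $\mathcal{G}=(V,E,\mathbf{0})$: for any $\mathcal{G}'=(V,E',w') \in \tau$ we have
\begin{equation*}
|E'| - |E| = \mathbf{1}^T w' - \mathbf{1}^T w = \mathbf{1}^T w',
\end{equation*}
since $w = \mathbf{0}$ gives $\mathbf{1}^T w = 0$.

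Next I would argue that $w'$ is a binary vector, i.e. $w'_i \in \{0,1\}$ for every $i \in V$. This is a structural invariant of the grammar $\Phi^*$: the initial flags lie in $\{0,1\}$, rule $r_1$ only toggles a flag from $0$ to $1$, rule $r_3$ only toggles a flag from $1$ to $0$, rule $r_4$ merely exchanges existing flags between adjacent nodes, and $r_2$ leaves all flags untouched; hence no rule can ever produce a flag outside $\{0,1\}$. (If desired, this can be stated as a one-line remark or folded into the proof of Lemma \ref{edgew}, since it is already implicit there.) Consequently $0 \leq \mathbf{1}^T w' \leq |V|$, the upper bound being attained only when every flag equals $1$.

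Combining the two displays yields $0 \leq |E'| - |E| \leq |V|$, which is exactly the assertion $|E| \leq |E'| \leq |E|+|V|$ of the corollary.

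\textbf{Main obstacle.} There is essentially no obstacle here: the statement is a direct corollary of Lemma \ref{edgew}, and the only thing one must be slightly careful about is explicitly noting that the $w$-labels remain in $\{0,1\}$ along any feasible trajectory, so that the sum $\mathbf{1}^T w'$ is trapped between $0$ and $|V|$. Everything else is a one-line substitution.
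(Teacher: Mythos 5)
Your proposal is correct and follows essentially the same route as the paper: apply Lemma \ref{edgew} with $w=\mathbf{0}$ to get $|E'|-|E|=\mathbf{1}^Tw'$, then bound $\mathbf{1}^Tw'$ between $0$ and $|V|$ using the fact that the flags are binary. The only difference is that you spell out why the flags remain in $\{0,1\}$ along any trajectory, which the paper simply asserts; this is a harmless (and arguably welcome) extra detail.
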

\begin{proof}
Let $\mathcal{G}=(V,E,\bold{0})$ be a graph, and let $\tau$ be any feasible trajectory of $(\mathcal{G}, \Phi^*)$.  In light of Lemma \ref{edgew}, every $\mathcal{G}'=(V,E',w') \in \tau$ satisfies
\begin{equation}
\label{edge_b2}
|E'|-|E| =   \bold{1}^Tw' - \bold{1}^T\bold{0} = \bold{1}^Tw'.
\end{equation}
Since each entry of $w$ is either equal to 1 or equal 0,
\begin{equation}
\label{edge_b3}
0 \leq \bold{1}^Tw' \leq |V|.
\end{equation}
Hence, (\ref{edge_b2}) and (\ref{edge_b3}) together imply (\ref{edge_b}).
\end{proof}


Corollary \ref{edgewcor} provides the upper and the lower bounds on the of number of edges along any trajectory of $(\mathcal{G},\Phi^*)$ for any $\mathcal{G}=(V,E,\bold{0})$. Lemma \ref{intdbar} shows that the corresponding interval always contains a value that implies an integer average degree. 



\begin{lem}
\label{intdbar} For any $\mathcal{G}=(V,E)$, a graph, $\mathcal{G}'=(V,E')$, satisfying $\bar{d}(\mathcal{G}') \in \mathbb{N}$ can be formed by adding at most $|V|$ edges to $\mathcal{G}$.
\end{lem}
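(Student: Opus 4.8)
The plan is to turn this into elementary arithmetic on $|E|$. Write $n=|V|$ and note that $\bar d(\mathcal{G}')=2|E'|/n$, so $\bar d(\mathcal{G}')\in\mathbb{N}$ is equivalent to $n\mid 2|E'|$. Put $n'=n$ if $n$ is odd and $n'=n/2$ if $n$ is even; then the condition $n\mid 2|E'|$ is exactly $n'\mid|E'|$, and in both cases $n'\le n$. Hence it suffices to produce a simple graph $\mathcal{G}'=(V,E')$ with $E\subseteq E'$, $|E'\setminus E|\le n$, and $|E'|$ a multiple of $n'$.

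First I would fix the target size: let $M$ be the least multiple of $n'$ with $M\ge|E|$, so that $0\le M-|E|\le n'-1\le n-1<|V|$. The one thing that could go wrong is that there is no room to add $M-|E|$ edges, i.e.\ that $M>\binom{n}{2}$, the number of edges of the complete graph $K_n$. I would rule this out by observing that $\binom{n}{2}=\tfrac{n(n-1)}{2}$ is itself divisible by $n'$: if $n$ is odd then $\tfrac{n-1}{2}\in\mathbb{Z}$ and $\binom{n}{2}=n\cdot\tfrac{n-1}{2}$, while if $n$ is even then $\binom{n}{2}=\tfrac{n}{2}\,(n-1)$. Since $|E|\le\binom{n}{2}$ and $\binom{n}{2}$ is a multiple of $n'$ that is at least $|E|$, the least such multiple $M$ satisfies $M\le\binom{n}{2}$.

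It then remains to assemble the graph: because $|E|\le M\le\binom{n}{2}$, one may pick any $M-|E|$ unordered pairs of nodes that are not already edges of $\mathcal{G}$ and add them to $E$, obtaining a simple graph $\mathcal{G}'=(V,E')$ with $|E'|=M$ and $|E'\setminus E|=M-|E|\le|V|$. Writing $M=n'q$ with $q\in\mathbb{N}$, we get $2M/n=2q$ when $n$ is odd and $2M/n=q$ when $n$ is even, so $\bar d(\mathcal{G}')=2M/n\in\mathbb{N}$, which is exactly what is claimed. The only step that needs any care is the feasibility bound $M\le\binom{n}{2}$ — making sure the requested edge additions do not overrun $K_n$ — and that is precisely where the parity-dependent fact that $\binom{n}{2}$ is divisible by $n'$ is used; the rest is divisibility bookkeeping.
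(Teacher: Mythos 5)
Your proof is correct. The paper's own argument is the same style of divisibility bookkeeping but with a coarser target: it writes $|E| = p|V| + q$ with $0 \leq q < |V|$ and adds $\eta^* = |V| - q$ edges so that $|E'|$ becomes a multiple of $|V|$ and $\bar{d}(\mathcal{G}') = 2(p+1)$ is an (even) integer. You instead reduce the condition $\bar{d}(\mathcal{G}') \in \mathbb{N}$ to its exact divisibility form $n' \mid |E'|$, with $n'$ equal to $n$ or $n/2$ according to parity, and round $|E|$ up to the nearest multiple of $n'$. This buys you two things. First, a sharper count: at most $n'-1 \leq n-1$ added edges rather than $|V|$. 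Second, and more substantively, your observation that $\binom{n}{2}$ is itself a multiple of $n'$ guarantees that the target $M$ fits inside the complete graph, i.e., that the required non-adjacent pairs actually exist so that $\mathcal{G}'$ is a simple graph. The paper never checks this, and its particular construction can overshoot: for $n=4$ and $|E|=5$ the paper's target is $|E'| = 8 > \binom{4}{2} = 6$, even though adding a single edge already yields integer average degree $3$. So your version both proves the stated lemma and closes a feasibility gap in the published proof --- a gap that is harmless in the sparse regime where the lemma is actually invoked (Corollary \ref{edgewcor} and Lemma \ref{reg_feas}), but present for the lemma as stated for arbitrary $\mathcal{G}$.
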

\begin{proof}
Let $\mathcal{G}=(V,E)$ be a graph, and let $\mathcal{G}'=(V,E')$ be a graph that is formed by adding $\eta$ edges to $\mathcal{G}$. Then,
\begin{equation}
\label{intbareq2}
\bar{d}(\mathcal{G}')=\frac{2(|E|+\eta)}{|V|}. 
\end{equation}
 For any $\mathcal{G}=(V,E)$, there exists unique pair of integers, $p ,q \in \mathbb{N}$, such that $0 \leq q < |V|$ and
\begin{equation}
\label{intbareq3}
|E| = p |V| + q.
\end{equation}
Let $\eta^* = |V|-q$. Note that $\eta^*\leq |V|$ since $0 \leq q < |V|$. Furthermore, by plugging $\eta^*$ into (\ref{intbareq2}), we obtain
\begin{equation}
\label{intbareq4}
\frac{2(|E|+\eta^*)}{|V|} = 2 p + 2\frac{q}{|V|} +2 -2\frac{q}{|V|} = 2(p+1).
\end{equation}
Since $p\in \mathbb{N}$, we have $2(p+1) \in \mathbb{N}$. Hence, for any $\mathcal{G}=(V,E)$, there exists $0 \leq \eta \leq \eta^*$ such that a graph, $\mathcal{G}'=(V,E')$, satisfying $\bar{d}(\mathcal{G}') \in \mathbb{N}$ can be obtained by adding $\eta$ edges to $\mathcal{G}$.
\end{proof}

In light of Lemma \ref{intdbar}, the interval in (\ref{edge_b}) always has at least one value implying an integer average degree.  Next, it is shown that $(\mathcal{G}, \Phi^*)$ almost surely reaches a regular graph for any connected $\mathcal{G}=(V,E,\bold{0})$ such that $\bar{d}(\mathcal{G})>2$. 


\begin{lem}
\label{increase_edges} Let $\mathcal{G}=(V,E,w)$ be a connected graph such that $\bar{d}(\mathcal{G}) \notin \mathbb{N}$. If $\bold{1}^Tw < |V|$, then $\mathcal{G}$ can be transformed via $\Phi^*$ into a graph, $\mathcal{G}'=(V,E',w')$, satisfying $|E'|=|E|+1$.
\end{lem}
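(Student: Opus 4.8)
The plan is to exhibit, starting from $\mathcal{G}$, an explicit finite sequence of applications of rules in $\Phi^*$ whose net effect is to add exactly one edge. Among the four rules of $\Phi^*$, only the edge-addition variant of $r_1$ raises $|E|$: it requires an adjacent pair $i,j$ with $d_i>d_j$ and a neighbor $h\in\mathcal{N}_i$ with $h\neq j$ and $(j,h)\notin E$ (an exclusive neighbor of $i$), together with $w_i=0$; it then adds $(j,h)$, retains $(i,h)$, and sets $w_i\leftarrow 1$, so $|E|$ increases by one. Hence it suffices to reach, through moves of $\Phi^*$, a configuration in which such a triple $(i,j,h)$ is present with the high-degree endpoint carrying flag $0$.

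First I would locate the triple, temporarily disregarding the flags. Since $\bar d(\mathcal{G})\notin\mathbb{N}$, the graph is not regular, so $\delta(\mathcal{G})<\Delta(\mathcal{G})$; as $\mathcal{G}$ is connected, any path joining a node of minimum degree to a node of maximum degree must contain an edge $(i,j)$ with $d_i>d_j$. For such an edge, $|\mathcal{N}_i\setminus\{j\}|-|\mathcal{N}_j\setminus\{i\}|=d_i-d_j\geq 1$, so some $h\in\mathcal{N}_i\setminus\{j\}$ satisfies $h\notin\mathcal{N}_j$; since $h\in\mathcal{N}_i$ forces $h\neq i$, this gives $(j,h)\notin E$. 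Thus $(i,j,h)$ meets all the structural requirements of the edge-addition variant of $r_1$.

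Next I would arrange $w_i=0$ without disturbing $(i,j,h)$. If $w_i=0$ already, there is nothing to do. Otherwise, by the hypothesis $\bold{1}^Tw<|V|$ there is a node with flag $0$; choose such a node $v$ at minimum distance from $i$ in $\mathcal{G}$, and let $v=v_0,v_1,\dots,v_\ell=i$ be a shortest path from $v$ to $i$, with $\ell\geq 1$. By minimality of the distance, $w_{v_1}=\cdots=w_{v_\ell}=1$. Applying $r_4$ (flag exchange) in turn to the edges $(v_0,v_1),(v_1,v_2),\dots,(v_{\ell-1},v_\ell)$ moves the lone $0$-flag from $v$ to $i$ and restores every intermediate flag to $1$. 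Crucially, $r_4$ alters no edges, so all degrees, neighborhoods, and the adjacency exploited by $r_1$ are unchanged; in particular $(i,j,h)$ still satisfies $d_i>d_j$, $(i,j)\in E$, $(i,h)\in E$, $(j,h)\notin E$, and now $w_i=0$. Applying the edge-addition variant of $r_1$ to $(i,j,h)$ then yields $\mathcal{G}'=(V,E',w')$ with $|E'|=|E|+1$.

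The only subtle point is the flag-relocation step: one has to be sure the chain of $r_4$ moves is a legitimate trajectory and does not accidentally spoil the $r_1$ precondition found earlier. This is exactly why $r_4$ is the right tool, since it touches only labels, and why $v$ is chosen at minimum distance, which guarantees each $r_4$ step swaps a genuine $(0,1)$ pair so the flag really propagates toward $i$. Everything else, namely the existence of the degree-discrepant edge, the existence of the exclusive neighbor, and connectivity being preserved along the way by Lemma \ref{cons_3r}, is either immediate or already established.
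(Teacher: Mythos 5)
Your proposal is correct and follows essentially the same route as the paper's proof: locate an adjacent pair with $d_i>d_j$ (which exists since $\bar{d}(\mathcal{G})\notin\mathbb{N}$ forces non-regularity) together with an exclusive neighbor $h$ of $i$, use $\bold{1}^Tw<|V|$ to find a zero flag and relay it to $i$ by applying $r_4$ along a shortest path, then invoke the edge-adding branch of $r_1$. Your treatment is slightly more explicit than the paper's (the counting argument for the exclusive neighbor and the check that the $r_4$ chain does not disturb the $r_1$ precondition), but the underlying argument is the same.
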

\begin{proof} If $\bar{d}(\mathcal{G}) \notin \mathbb{N}$, then $\mathcal{G}$ has to be a non-regular graph. Furthermore, since $\mathcal{G}$ is connected, then there has to be at least a pair of neighboring nodes, $i$ and $j$, with different degrees. Without loss of generality let $d_i>d_j$. Then, $i$ should have at least one neighbor that is not linked to $j$. As such, $r_1$ is applicable on $\mathcal{G}=(V,E)$. Furthermore,  since each entry of $w$ is either 0 or 1,  at least one entry of $w$ is equal to 0 if $\bold{1}^Tw<|V|$.  If $w_i=0$, then an extra edge can be added to the graph via $r_1$ with probability $\beta$. If $w_i \neq 0$, then there exists another node, $i'$, such that $w_{i'}=0$. Apply $r_4$ along the shortest path between $i$ and $i'$ to obtain $w_i=0$. Once $w_i=0$ is obtained, then with probability $\beta$ an extra edge can be added to the graph in accordance with $r_1$ to obtain a graph, $\mathcal{G}'=(V,E',w')$, satisfying $|E'|=|E|+1$.
\end{proof}

\begin{lem}
\label{triangle_form} Let $\mathcal{G}=(V,E,w)$ be a connected triangle-free graph. If $\bar{d}(\mathcal{G})>2$, then $\mathcal{G}$ can be transformed via $\Phi^*$ into a graph, $\mathcal{G}'=(V,E',w)$, containing at least one triangle.
\end{lem}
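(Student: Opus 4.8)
The plan is to show that a single application of the neighbor–exchange rule $r_2$ strictly decreases the girth of a connected triangle-free graph whose average degree exceeds $2$, and then to iterate this until a triangle appears. First I would note that $\bar{d}(\mathcal{G})>2$ forces $|E|>|V|$, so the connected graph $\mathcal{G}$ contains a cycle; being triangle-free, its girth (the length of a shortest cycle) is some $g\ge 4$. Fix a shortest cycle $C=v_1v_2\cdots v_gv_1$. Such a cycle is chordless, since a chord would split $C$ into two cycles, the shorter of which would have length less than $g$. I would then argue that some vertex of $C$ has a neighbor outside $C$: otherwise every vertex of $C$ would have degree exactly $2$ (as $C$ is chordless), making $C$ a connected component and hence, by connectivity, all of $\mathcal{G}$, contradicting $|E|>|V|$. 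So we may assume $v_1$ has a neighbor $u\notin\{v_1,\ldots,v_g\}$.

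Next I would carry out the exchange on the edge $(v_1,v_2)$, taking $u$ as the exclusive neighbor of $v_1$ and $v_3$ as the exclusive neighbor of $v_2$. Triangle-freeness together with $(v_1,v_2),(v_1,u)\in E$ gives $(v_2,u)\notin E$, chordlessness of $C$ gives $(v_1,v_3)\notin E$, and $u\notin\{v_2,v_3\}$, $v_1\ne v_3$; hence $r_2$ is applicable and produces a graph $\mathcal{G}_1$ with edge set $(E\setminus\{(v_1,u),(v_2,v_3)\})\cup\{(v_1,v_3),(v_2,u)\}$. Since $r_2$ changes neither the number of edges nor the flag vector $w$, and $\mathcal{G}_1$ is connected by Lemma \ref{cons_3r}, we have $\bar{d}(\mathcal{G}_1)=\bar{d}(\mathcal{G})>2$ and the label vector is still $w$. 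Finally, in $\mathcal{G}_1$ the vertices $v_1,v_3,v_4,\ldots,v_g$ form a cycle of length $g-1$: the edge $(v_1,v_3)$ is newly present, while $(v_3,v_4),\ldots,(v_{g-1},v_g),(v_g,v_1)$ lie on $C$ and were not removed. Hence $\mathcal{G}_1$ has girth at most $g-1$; when $g=4$ this cycle is already a triangle.

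I would then conclude by repetition: as long as the current graph is triangle-free it is connected with average degree exceeding $2$, so the move above applies and strictly decreases the girth, which stays at least $4$ while the graph remains triangle-free; since the girth cannot decrease forever, after finitely many applications of $r_2$ the graph contains a triangle, and since each move preserves $w$ the final graph has the form $\mathcal{G}'=(V,E',w)$. The step I expect to need the most care is confirming that this specific $r_2$ is genuinely applicable in every case: besides the side conditions above, one must note that when $g=4$ the vertex $u$ may happen to be adjacent to $v_3$, so the argument should rely on the fact that rule applicability is defined through a subgraph (not necessarily an \emph{induced} subgraph) isomorphic to the rule's left-hand side, so that the possible extra edge $(u,v_3)$ does not block the match and is simply left in place. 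The remaining ingredients — the cycle bookkeeping and the invariants (connectivity from Lemma \ref{cons_3r}, and preservation of $|E|$ and of $w$ under $r_2$) — are routine.
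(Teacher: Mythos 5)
Your proposal is correct and follows essentially the same route as the paper's proof: take a shortest cycle, use $\bar{d}(\mathcal{G})>2$ and connectivity to find a cycle node with an off-cycle neighbor, and apply $r_2$ to that node and its cycle-neighbor (swapping the off-cycle neighbor with the next node on the cycle) so as to shorten the cycle, iterating until a triangle appears. Your extra care about the possible edge $(u,v_3)$ when $g=4$ and about the invariance of $w$ and connectivity is sound but does not change the argument.
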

\begin{proof} Let $\mathcal{G}=(V,E,w)$ be a connected triangle-free graph. Since $\bar{d}(\mathcal{G})>2$, $\mathcal{G}$ has to be a cyclic graph. Consider the shortest cycle on  $\mathcal{G}$, $C_s$. Note that $C_s$ has to contain at least 4 nodes, since $\mathcal{G}$ is triangle-free. Furthermore, if $\bar{d}(\mathcal{G})>2$, then $C_s$ cannot contain all the nodes of  $\mathcal{G}$, as otherwise $\mathcal{G}$ has to be a cycle graph with $\bar{d}(\mathcal{G})=2$. Also, since $\mathcal{G}$ is a connected graph, at least one node on $C_s$ must be linked to a node that is not contained in $C_s$.  Without loss of generality, let $i$ be a node on $C_s$ connected to an off-cycle node, $h$. Furthermore, let $j$ denote a neighbor of $i$ on $C_s$. Note that $h$ can not be connected to $j$ since $\mathcal{G}$ is triangle-free.  Furthermore, since the $C_s$ contains at least 4 nodes, $j$ has to have another neighbor on $C_s$, $f$, which is not connected to $i$. If $i$ and $j$ execute $r_2$ by rewiring $h$ and $f$ to each other, $C_s$ becomes a shorter cycle. Note that $w$ is invariant under $r_2$. Hence, by repeating this process until $C_s$ consists of 3 nodes, $\mathcal{G}$ can be transformed into a graph, $\mathcal{G}'=(V,E',w)$, containing at least one triangle. 
\end{proof}

\begin{lem}
\label{decrease_edges} Let $\mathcal{G}=(V,E,w)$ be a connected graph such that $\bar{d}(\mathcal{G}) \notin \mathbb{N}$. If $\bold{1}^Tw > 0$ and $\bar{d}(\mathcal{G})>3$, then $\mathcal{G}$ can be transformed via $\Phi^*$ into a graph, $\mathcal{G}'=(V,E',w')$, satisfying $|E'|=|E|-1$.
\end{lem}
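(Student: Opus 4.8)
The plan is to steer $\mathcal{G}$ into a configuration where the edge-removal rule $r_3$ is applicable, and then apply it once; since $r_3$ is the only rule in $\Phi^*$ that decreases $|E|$, this is the only available mechanism. Rule $r_3$ fires on a triangle whose three vertices do not all have equal degree and one of whose higher-degree vertices carries the flag $w=1$, so I must arrange, simultaneously at the end, (i) a triangle, (ii) a degree asymmetry inside it, and (iii) a $w=1$ flag on a maximum-degree vertex of that triangle. All the preparatory moves will use only $r_2$, $r_4$, and $r_1$ in its pure-rewiring branch (probability $1-\beta$): none of these changes $|E|$, the rewirings and $r_2$ leave $w$ unchanged, and $r_4$ leaves $\mathbf{1}^T w$ unchanged. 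Hence the hypotheses $\bar{d}(\mathcal{G})\notin\mathbb{N}$ (i.e.\ $\mathcal{G}$ non-regular, so $\delta(\mathcal{G})<\bar{d}(\mathcal{G})<\Delta(\mathcal{G})$) and $\mathbf{1}^T w>0$ persist up to the final $r_3$.

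\textbf{Step 1 (a triangle).} If $\mathcal{G}$ is triangle-free, then $\bar{d}(\mathcal{G})>3>2$ lets Lemma~\ref{triangle_form} produce, through $r_2$ only, a graph containing a triangle, with $w$ and $|E|$ unchanged. So assume $\mathcal{G}$ has a triangle $\{a,b,c\}$. \textbf{Step 2 (a non-uniform triangle).} If $d_a,d_b,d_c$ are not all equal, go to Step 3. Otherwise $d_a=d_b=d_c=:d$. If some triangle vertex, say $a$, has a neighbor $g$ with $d_g>d$, a single $r_1$-rewiring on $(g,a)$ — moving an exclusive neighbor of $g$ relative to $a$ onto $a$ — raises $d_a$ to $d+1$ and leaves the triangle intact, since no neighbor of $g$ adjacent to $a$ (in particular $b$ and $c$) is an admissible exclusive neighbor. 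If no triangle vertex has a strictly larger-degree neighbor, then since $\mathcal{G}$ is non-regular there is a vertex $z$ with $d_z\neq d$; take such a $z$ at minimum distance from $\{a,b,c\}$, joined by a path $a=p_0,p_1,\dots,p_\ell=z$ whose internal vertices all have degree $d$, and use a chain of $r_1$-rewirings to push the degree discrepancy at the edge $(p_{\ell-1},z)$ back along the path until it reaches $p_0=a$, so that $d_a\neq d_b$.

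\textbf{Step 3 and conclusion.} Since $\mathbf{1}^T w>0$, some vertex $u$ has $w_u=1$; let $x$ be a maximum-degree vertex of $\{a,b,c\}$, so $d_x>\min\{d_a,d_b,d_c\}$ and $x$ is a higher-degree vertex of the triangle. Apply $r_4$ along a path from $u$ to $x$ to move the flag, obtaining $w_x=1$; since $r_4$ touches no edges and no degrees, the triangle and its degree profile are unchanged and $\mathbf{1}^T w$ is unchanged. Now $r_3$ is applicable to $\{a,b,c\}$, and one application removes a triangle edge, yielding $\mathcal{G}'=(V,E',w')$ with $|E'|=|E|-1$, as claimed.

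The main obstacle is Step 2: routing the degree discrepancy into the triangle without destroying the triangle or the still-needed portion of the path. I would carry this out by maintaining, as an invariant, the triangle together with the subpath between the current location of the discrepancy and a triangle vertex, and arguing that at each rewiring there is always an admissible choice of exclusive neighbor that preserves this invariant; the "no spare exclusive neighbor" corner cases (where one is forced onto the path) need a little care, and this is where the bound $\bar{d}(\mathcal{G})>3$ — rather than merely $\bar{d}(\mathcal{G})>2$ as in Lemma~\ref{triangle_form} — provides the needed degree slack. The surplus-push direction (moving a degree above $d$ toward the triangle) is cleanly handled because the removed edge at each step is incident to the pushing vertex and cannot coincide with a backward path edge; the deficit-push direction is the delicate one and may require rerouting when a chord is forced to be created.
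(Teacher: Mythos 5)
Your overall architecture matches the paper's: form a triangle (via Lemma~\ref{triangle_form}), make its degrees non-uniform by routing a degree discrepancy to it with $r_1$, ferry a $w=1$ flag onto a higher-degree triangle vertex with $r_4$, and finish with one application of $r_3$. The bookkeeping that $|E|$, $\mathbf{1}^Tw$, and $\bar{d}\notin\mathbb{N}$ persist through the preparatory moves is also correct. But the step you yourself flag as ``the main obstacle'' --- Step~2, routing the discrepancy into a uniform-degree triangle --- is a genuine gap, and the degree slack you hope to extract from $\bar{d}(\mathcal{G})>3$ is not actually available in your setup: an average degree above $3$ puts no lower bound on $\delta(\mathcal{G})$ or on the common degree of the triangle vertices. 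Concretely, suppose the triangle $\{a,b,c\}$ has $d_a=d_b=d_c=3$, so each triangle vertex has exactly one neighbor outside the triangle, and the nearest deviant-degree vertex has smaller degree. After you push the deficit along the path until a lower-degree node $s$ is adjacent to $a$, the neighbors of $a$ are exactly $b$, $c$, and $s$; the only way to lower $d_a$ via $r_1$ is to rewire $b$ or $c$ onto $s$, which destroys the triangle. Your proposal acknowledges this class of corner cases but does not resolve it, so the proof is not complete.

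The paper closes exactly this hole with a preliminary move you omit: before forming the triangle, it invokes Theorem~\ref{degreg} to drive the degree range to $1$ using $r_1$ alone (possible because $\bar{d}\notin\mathbb{N}$, and preserving $|E|$ and $w$). Once the range is $1$ and $\bar{d}>3$, every degree is either $\delta$ or $\delta+1$ with $\delta\geq 3$; then in the deficit-push case the triangle vertex satisfies $d_i>d_{i'}\geq 3$, hence $d_i\geq 4$, guaranteeing a fourth neighbor to rewire without touching the triangle, and the path argument only ever has to move a discrepancy of exactly one between two adjacent degree classes. I recommend you prepend that degree-balancing step; with it, your Steps 2--3 become essentially the paper's case analysis and the argument goes through.
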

\begin{proof} Let $\mathcal{G}=(V,E,w)$ be a connected graph such that $\bar{d}(\mathcal{G}) \notin \mathbb{N}$. Since $\bar{d}(\mathcal{G}) \notin \mathbb{N}$, the degree range can be reduced to 1 without changing $\bar{d}$ and $w$ via $r_1$, as given in Theorem \ref{degreg}. Furthermore, in light of Lemma \ref{triangle_form}, if the resulting graph is triangle-free, then it can be transformed via $r_2$ into a graph, $\mathcal{G}^{+}=(V,E^{+},w)$, which contains at least one triangle. Since  $\Delta(\mathcal{G}^+)-\delta(\mathcal{G}^+)=1$ and $\bar{d}(\mathcal{G}^+)=\bar{d}(\mathcal{G})>3$, we have $\delta(\mathcal{G}^+)\geq3$.

If $\bold{1}^Tw> 0$, then at least one entry of $w$ is equal to 1. As such, $w_i=1$ can be reached for any $i \in V$ via a sequence of $r_4$ applications. Hence, if any of the triangles in $\mathcal{G}^{+}$ involves nodes with non-uniform degrees, then one edge of the triangle can be broken in accordance with $r_3$ to obtain a graph, $\mathcal{G}'=(V,E')$, satisfying $|E'|=|E|-1$. 

On the other hand, assume that all the triangles on $\mathcal{G}^+$ consist of nodes with uniform degrees. Let $i$, $j$, and $h$ be the nodes of such a triangle. Consider the shortest path from any node of this triangle to a node whose degree is not equal to the degrees of the triangle nodes. Without loss of generality, let this path be between $i$ and $i'$ such that $d^+_{i'} \neq d^+_i$. Note that each node other than $i'$ on this path has its degree equal to $d^+_i$, and either $d^+_i>d^+_{i'}$ or $d^+_i<d^+_{i'}$. Both cases are inspected below: 

1) If $d^+_i>d^+_{i'}$, then $d^+_i \geq 4$ given that $\delta(\mathcal{G}^+)\geq3$. Hence, the node next to $i'$ on the shortest path has at least one neighbor outside the shortest path that is not connected to $i'$. Applying $r_1$ by rewiring such a neighbor to $i'$ results in a shorter path from $i$ to a smaller degree node. Applying the same procedure eventually results in a smaller degree node being adjacent to $i$. Since $d^+_i \geq 4$, $i$ has at least one neighbor other than $j$ and $h$ to rewire to its smaller degree neighbor. After the corresponding application of $r_1$, we obtain a triangle of nodes with non-uniform degrees.  Hence, one edge of the triangle can be broken in accordance with $r_3$ to obtain a graph, $\mathcal{G}'=(V,E',w')$, satisfying $|E'|=|E|-1$. 

2) If $d^+_i<d^+_{i'}$, then $i'$ can apply $r_1$ to rewire one of its neighbors to the node on the shortest path next to itself. This results in a shorter path from $i$ to higher degree node is obtained. Applying the same procedure eventually increases the degree of $i$, and we obtain a triangle of nonuniform degrees. Hence, one edge of the triangle can be broken in accordance with $r_3$ to obtain a graph, $\mathcal{G}'=(V,E',w')$, satisfying $|E'|=|E|-1$.
\end{proof}

\begin{lem}
\label{reg_feas}  Let $\mathcal{G}=(V,E,\bold{0})$ be a connected graph. If $\bar{d}(\mathcal{G})>2$, then $(\mathcal{G},\Phi^*)$ almost surely reaches an $m$-regular graph such that $\bar{d}(\mathcal{G})\leq m \leq \bar{d}(\mathcal{G})+2$.
\end{lem}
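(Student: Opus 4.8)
The plan is to view the nondeterministic system $(\mathcal{G},\Phi^*)$ --- equipped with an implementation that assigns positive probability to every feasible single--rule application --- as a finite--state Markov chain, and to establish three things: that the set of regular graphs is absorbing and any regular graph reachable from $\mathcal{G}=(V,E,\bold{0})$ already lies in the target family; that from every reachable configuration a regular graph is reachable along a finite sequence of positive--probability transitions; and finally that these two facts force almost--sure convergence into the set of regular graphs.

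\emph{Finiteness and absorption.} By Lemma~\ref{cons_3r} every reachable configuration is connected, and by Corollary~\ref{edgewcor} its number of edges lies in $[\,|E|,\,|E|+|V|\,]$, so only finitely many configurations are reachable. If a trajectory ever reaches a regular graph $\mathcal{G}^*$, then $r_1$ (whose rewiring and edge--adding actions both require two adjacent nodes of distinct degrees) and $r_3$ (which requires a triangle whose three nodes do not all share the same degree) cease to be applicable, while $r_2$ and $r_4$ leave every node degree unchanged; hence the trajectory stays in the set of regular graphs forever, and in fact at a fixed degree. Since $|E^*|\in[\,|E|,\,|E|+|V|\,]$, the degree $m=2|E^*|/|V|$ of such a graph satisfies $\bar d(\mathcal{G})\le m\le\bar d(\mathcal{G})+2$. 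It therefore suffices to show that the chain almost surely enters the (closed) set of regular graphs.

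\emph{Reaching a regular graph, then concluding.} Fix an arbitrary reachable configuration $\mathcal{G}'=(V,E',w')$. If $\bar d(\mathcal{G}')\in\mathbb{N}$, apply Theorem~\ref{degreg} with $\mathcal{G}'$ as initial graph (valid, since $\mathcal{G}'$ is connected and the degree--balancing dynamics is memoryless): the degree range almost surely reaches $0$, hence along some finite path; under $\Phi^*$ that rule is precisely the probability--$(1-\beta)$ branch of $r_1$ and it leaves $w$ fixed, so each step of the path is a positive--probability transition. If $\bar d(\mathcal{G}')\notin\mathbb{N}$, fix an integer $m^*$ with $m^*|V|$ even, $3\le m^*\le|V|-1$, and $|E|\le|E^*|\le|E|+|V|$, where $|E^*|=m^*|V|/2$ --- such an $m^*$ exists by refining Lemma~\ref{intdbar}: round $\bar d(\mathcal{G})$ up to the nearest integer whose product with $|V|$ is even, and note this value does not exceed $|V|-1$ because $\bar d(\mathcal{G})\le|V|-1$. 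Since $\bar d(\mathcal{G}')\notin\mathbb{N}$ we have $|E'|\ne|E^*|$, and we drive the edge count of $\mathcal{G}'$ towards $|E^*|$ one edge at a time: while the current graph $\mathcal{H}$ has $|E_{\mathcal H}|<|E^*|$ it satisfies $\bold{1}^Tw_{\mathcal H}=|E_{\mathcal H}|-|E|<|V|$, so Lemma~\ref{increase_edges} adds an edge whenever $\bar d(\mathcal H)\notin\mathbb{N}$; while $|E_{\mathcal H}|>|E^*|$ it satisfies $\bold{1}^Tw_{\mathcal H}>0$ and $\bar d(\mathcal H)>m^*\ge 3$, so Lemma~\ref{decrease_edges} removes an edge whenever $\bar d(\mathcal H)\notin\mathbb{N}$. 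We stop as soon as $\bar d(\mathcal H)$ first becomes an integer --- which occurs no later than when $|E_{\mathcal H}|=|E^*|$ --- and then revert to the first case. Every intermediate move (flag transfers by $r_4$, rewirings by $r_1$, the cycle--shortenings by $r_2$ supplied by Lemma~\ref{triangle_form}, the triangle break by $r_3$) is a single $\Phi^*$--application, hence a positive--probability transition. This proves the claim. Now let $N$ bound the number of reachable configurations and $\epsilon>0$ bound from below the probabilities of the finite paths just constructed; then from any reachable configuration the set of regular graphs is reached within $N$ steps with probability at least $\epsilon$, so the chance of avoiding it for $jN$ steps is at most $(1-\epsilon)^j\to 0$. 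Hence $(\mathcal{G},\Phi^*)$ almost surely reaches a regular graph, which by the previous paragraph is $m$--regular with $\bar d(\mathcal{G})\le m\le\bar d(\mathcal{G})+2$.

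I expect the main obstacle to be the non--integer case of the middle step, specifically pinning down the integer target $m^*$: it must simultaneously be realizable as the degree of a regular graph on $|V|$ vertices (so $m^*|V|$ even and $m^*\le|V|-1$), lie strictly inside the edge budget $[\,|E|,\,|E|+|V|\,]$ from every intermediate configuration so the edge additions and removals never run out, and be at least $3$ so that Lemma~\ref{decrease_edges} applies on the downward leg; the borderline cases where $\bar d(\mathcal{G})$ is near $2$ or near $|V|-1$, together with the parity of $|V|$, must be tracked with care. A secondary point to state cleanly is why the single--rule applications invoked above all carry positive probability under the chosen distributed implementation of $\Phi^*$.
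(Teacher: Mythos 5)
Your proof is correct and takes essentially the same route as the paper's: it combines Corollary~\ref{edgewcor} and Lemma~\ref{intdbar} to locate an integer target degree in the admissible edge interval, uses Lemmas~\ref{increase_edges} and~\ref{decrease_edges} to drive the edge count there, then invokes degree balancing via $r_1$ and concludes by a positive-probability finite-path argument over the finite state space. Your write-up is somewhat more explicit than the paper's on two points it glosses over --- the realizability constraints on the target ($m^*|V|$ even, $m^*\le|V|-1$) and the geometric-trials argument turning reachability into almost-sure absorption --- but these are refinements of the same argument rather than a different approach.
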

\begin{proof} Let $\mathcal{G}=(V,E,\bold{0})$ be a connected graph such that $\bar{d}(\mathcal{G})>2$. Due to Corollary \ref{edgewcor}, the number of edges stays in $\{|E|, |E|+1, \hdots |E|+|V| \}$ along any trajectory of $(\mathcal{G},\Phi^*)$. In light of Lemma \ref{intdbar}, this interval contains at least one value, $|E|+\eta$, such that the corresponding average degree, $m$, is an integer satisfying

\begin{equation}
\label{meq}
\bar{d}(\mathcal{G}) \leq m = \frac{2(|E|+\eta)}{|V|}\leq \bar{d}(\mathcal{G})+2.
\end{equation}
 As such, $m \geq 3$ since $\bar{d}(\mathcal{G})>2$. Let $\mathcal{G}'=(V,E',w')$ be a graph reached from $\mathcal{G}$ via $\Phi^*$ such that $\bar{d}(\mathcal{G}') \notin \mathbb{N}$. Then, either $|E'|<|E|+\eta$ or $|E'|>|E|+\eta$. Both cases are inspected below: 

1) If $|E'|<|E|+\eta$, then $\bold{1}^Tw'=|E'|-|E| < |V|$. As such, in light of Lemma \ref{increase_edges}, $\mathcal{G}'$ can have its number of edges increased by 1 via $\Phi^*$. This process can be repeated unless the average degree reaches an integer value. 

2) If $|E'|>|E|+\eta$, then $\bold{1}^Tw'=|E'|-|E| > 0$. Furthermore, $\bar{d}(\mathcal{G}')>3$ since $m\geq 3$. As such, in light of Lemma \ref{decrease_edges}, $\mathcal{G}'$ can have its number of edges decreased by 1 via $\Phi^*$. This process can be repeated unless the average degree reaches an integer value.

Hence, in either case it is possible to add or remove edges to the network via $\Phi^*$ until the number of edges implies an integer average degree, $m \in[ \bar{d}(\mathcal{G}), \bar{d}(\mathcal{G})+2]$. Once an average degree of $m$ is obtained, then the graph can be driven to a $m$-regular configuration through a sequence of $r_1$ applications. As such, any $\mathcal{G}'$ has a non-zero probability of reaching an $m$-regular graph after a finite sequence of $\Phi^*$ applications. Consequently, $(\mathcal{G},\Phi^*)$ almost surely reaches an $m$-regular graph such that $\bar{d}(\mathcal{G})\leq m \leq \bar{d}(\mathcal{G})+2$.
\end{proof}

Unlike $\Phi_R$ and $\Phi_{RR}$, the degree range does not monotonically decrease under $\Phi^*$ due to the possible edge addition and removals. For instance, increasing the total number of edges via $r_1$ would increases the degree range if $d_h= \Delta(\mathcal{G})$ and $\delta(\mathcal{G})<d_j$. However, if a regular graph is reached, then the graph remains regular since neither $r_1$ nor $r_3$ are applicable on a regular graph.

\begin{lem}
\label{regstaysreg} Let $\mathcal{G}$ be an $m$-regular graph. Any feasible trajectory of $(\mathcal{G}, \Phi^*)$ consists of only $m$-regular graphs. 
\end{lem}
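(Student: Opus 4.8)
The plan is to observe that on an $m$-regular graph only the rules $r_2$ and $r_4$ of $\Phi^*$ can ever be invoked, that neither of these changes any node degree, and then to conclude by induction along the trajectory.

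First I would argue that $r_1$ and $r_3$ are inapplicable on an $m$-regular graph. The rule $r_1$ requires a pair of adjacent nodes $i,j$ with $d_i>d_j$, which cannot occur when every node has degree $m$; likewise, $r_3$ requires a triangle whose three nodes do not all share the same degree, which is again impossible in a regular graph. Hence every rule instance appearing in any concurrent application of $\Phi^*$ to an $m$-regular graph is an instance of $r_2$ or of $r_4$.

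Next I would check that $r_2$ and $r_4$ preserve the degree of every node. The rule $r_4$ only exchanges the binary flags $w$ of two adjacent nodes and leaves the edge set untouched, so all degrees are trivially preserved. The rule $r_2$ acts on a pair of adjacent nodes $i,j$ by replacing the edges $(i,h)$ and $(j,f)$ with $(i,f)$ and $(j,h)$, where $h$ is a neighbor of $i$ not adjacent to $j$ (and distinct from $j$) and $f$ is a neighbor of $j$ not adjacent to $i$ (and distinct from $i$); the applicability conditions $(i,f)\notin E$ and $(j,h)\notin E$ guarantee $h\neq f$ and that the new edges are genuinely new. Thus $i$ trades the neighbor $h$ for $f$, $j$ trades $f$ for $h$, $h$ trades $i$ for $j$, and $f$ trades $j$ for $i$, so each of the four participating nodes keeps its degree and no other node is affected. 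Since this holds for each individual rule instance, any concurrent application built only from instances of $r_2$ and $r_4$ sends an $m$-regular graph to an $m$-regular graph.

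Finally I would close the argument by induction on the index along the trajectory $\tau=\{\mathcal{G}(0),\mathcal{G}(1),\hdots\}$ with $\mathcal{G}(0)=\mathcal{G}$: the base case is the hypothesis, and if $\mathcal{G}(t)$ is $m$-regular then the two steps above force $\mathcal{G}(t+1)$ to be $m$-regular as well. I do not anticipate a real obstacle here; the only point worth a line of care is that, even when several $r_2$ instances of a concurrent application touch a common node, the gained and lost edges at that node still cancel, so degree preservation survives the concurrency.
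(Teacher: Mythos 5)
Your proposal is correct and follows essentially the same route as the paper's proof: rule out $r_1$ and $r_3$ on a regular graph because they require a degree imbalance, note that $r_2$ and $r_4$ leave every node's degree unchanged, and propagate along the trajectory. The only difference is that you spell out the degree bookkeeping for $r_2$ and the induction explicitly, which the paper leaves implicit.
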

\begin{proof}
On an $m$-regular graph,  $r_1$ and $r_3$ are not applicable since $d_i=d_j$ for any pair of nodes, $i,j$. Furthermore, since the node degrees are invariant to the applications of $r_2$ and $r_4$, any feasible trajectory of $(\mathcal{G}, \Phi^*)$ consists of only $m$-regular graphs.
\end{proof}
 
Once an $m$-regular graph is reached, the graph evolves only under $r_2$ and $r_4$. As such, it keeps randomizing within the corresponding set of $m$-regular graphs.  The induced limiting distribution depends on how the agents implement $\Phi^*$. In the next section, we present a distributed implementation that leads to a uniform limiting distribution over the corresponding set of $m$-regular graphs.
\section{Distributed Implementation} 
\label{distalg}
 In this section, we present Algorithm \ref{RoundRandRegAlg} as a distributed implementation of $\Phi^*$ and analyze the resulting dynamics. 

In accordance with Algorithm \ref{RoundRandRegAlg}, the nodes behave as follows: At each iteration, each node is inactive with a probability, $\epsilon \in (0,1)$. The inactivation probability, $\epsilon$, ensures that any feasible application of $\Phi^*$ can be realized through Algorithm \ref{RoundRandRegAlg}, as it will be shown in Lemma \ref{phicouples3R}. Inactive nodes do not participate in any rule execution in that time step.  Each active agent, $i$, picks one of its neighbors, $j \in \mathcal{N}_i$, uniformly at random, and it communicates its degree to that neighbor. Through these communications, each active agent, $i$, obtains the list of neighbors that picked itself, $R_i \subseteq \mathcal{N}_i$, and checks if it is matched, i.e. if $j\in R_i$. If that is not the case, then $i$ is a follower in that time step, i.e. it will not initiate a rule execution but it will participate if $j$ wants to rewire $i$ to some other node. If $j\in R_i$, then $i$ and $j$ are matched. Each matched pair randomly pick a candidate rule, $r \in \Phi^*$, that they will potentially execute. In Algorithm \ref{RoundRandRegAlg}, without loss of generality, the candidate rule is picked by the agent with the larger node ID, i.e. $\max\{i,j\}$. Depending on the chosen rule, $r \in \Phi^*$, the matched pair of nodes, $i$ and $j$, do one of the following:
\begin{enumerate}
\item $r=r_1$:  If $d_i>d_j$ and $|R_i|\geq 2$, then $i$ chooses a neighbor, $h \neq j \in R_i$, uniformly at random. If $h \notin \mathcal{N}_j$, then, with probability $1-\beta$, $h$ is rewired to $j$. With the remaining probability $\beta$, if $w_i=0$, a link is formed between $h$ and $j$, the link between $i$ and $h$ is maintained, and $w_i'=1$.

\item $r=r_2$: If $|R_i|, |R_j| \geq 2$, both $i$ and $j$ choose one neighbor, $h \neq j  \in R_i$ and $f  \neq i \in R_j$, uniformly at random. If neither $h$ nor $f$ is linked to both $i$ and $j$, then $r_2$ is executed by rewiring $h$ to $j$ and $f$ to $i$.  

\item $r=r_3$: If $d_i>d_j$, $|R_i|\geq 2$), and  $w_i=1$, then $i$ chooses a neighbor, $h \neq j \in R_i$, uniformly at random. If $h \in \mathcal{N}_j$, then the link between $i$ and $h$ is removed and $w_i'=0$. 

\item $r=r_4$: $i$ and $j$ swap their binary flags, i.e. $w_i'=w_j$ and $w_j'=w_i$. 

\end{enumerate}
A feasible iteration of the algorithm on a network is illustrated in Fig. \ref{sample_flow}, where $\mathcal{G}=(V,E,w)$ is transformed into $\mathcal{G}'=(V, E', w')$ in one time step.

\begin{spacing}{1.5}
\begin{center}
\resizebox{6.7cm}{!}{
\begin{alg}{|l|}
\hline
\label{RoundRandRegAlg}
\textbf{Algorithm \Roman{algno}}: Distributed Implementation of $\Phi^*$ \\
\hline
\mbox{\small $\;1:\;$}\textbf{initialize:} $\mathcal{G}=(V,E,w=\bold{0})$ connected, $\epsilon , \beta \in (0,1)$ \\
\mbox{\small $\;2:\;$}\textbf{repeat} \\
\mbox{\small $\;3:\;$}\hspace{0.2cm} Each agent, $i$, is active with probability $1-\epsilon$. \\
\mbox{\small $\;4:\;$}\hspace{0.2cm} Each active $i$ picks a random $j \in \mathcal{N}_i$. \\
\mbox{\small $\;5:\;$}\hspace{0.2cm} For each $i \in V$, $R_i = \{i' \in \mathcal{N}_i \mid i' \text{ picked } i\}$. \\
\mbox{\small $\;6:\;$}\hspace{0.2cm} $\textbf{for}\hspace{0.1cm}  $(each $(i,j)$ s.t. $i \in R_j, j \in R_i$, $d_i\geq d_j$)\\
\mbox{\small $\;7:\;$}\hspace{0.4cm}$\max\{i,j\}$ picks a random $r \in \Phi^*$.\\
\mbox{\small $\;8:\;$}\hspace{0.4cm} $\textbf{if}\hspace{0.1cm}  $($r=r_1$, $d_i>d_j$, $|R_i| \geq 2$)\\
\mbox{\small $\;9:\;$}\hspace{0.6cm} $i$ picks a random $h\in R_i \setminus \{j\}$.\\
\mbox{\small $10:\;$}\hspace{0.6cm} $\textbf{if} \hspace{0.1cm}  $($(j,h) \notin E$)\\
\mbox{\small $11:\;$}\hspace{0.8cm} $i$ picks a random $\beta' \in [0,1]$.  \\
\mbox{\small $12:\;$}\hspace{0.8cm} $\textbf{if} \hspace{0.1cm}  $($\beta'\geq \beta$)\\
\mbox{\small $13:\;$}\hspace{1cm}  $E =  (E \setminus \{(i,h)\}) \cup \{(j,h)\}$.  \\
\mbox{\small $14:\;$}\hspace{0.8cm} $\textbf{else if} \hspace{0.1cm}  $($w_i=0$)\\
\mbox{\small $15:\;$}\hspace{1cm}  $E =  E \cup \{(j,h)\}$, $w_i=1$.  \\
\mbox{\small $16:\;$}\hspace{0.8cm} \textbf{end if}\\
\mbox{\small $17:\;$}\hspace{0.6cm} \textbf{end if}\\

\mbox{\small $18:\;$}\hspace{0.4cm} $\textbf{else if} \hspace{0.1cm}  $($r=r_2$, $|R_i|\geq 2$,$|R_j| \geq 2$)\\
\mbox{\small $19:\;$}\hspace{0.6cm} $i$ picks a random $h\in R_i \setminus \{j\}$.\\
\mbox{\small $20:\;$}\hspace{0.6cm} $j$ picks a random $f\in R_j\setminus \{i\}$.\\
\mbox{\small $21:\;$}\hspace{0.6cm} $\textbf{if} \hspace{0.1cm}  $($(i,f) \notin E$, $(j,h) \notin E$)\\
\mbox{\small $22:\;$}\hspace{0.8cm} $E = (E \setminus \{(i,h), (j,f)\}) \cup \{(i,f),(j,h)\}$.\\

\mbox{\small $23:\;$}\hspace{0.4cm} $\textbf{else if} \hspace{0.1cm}  $($r=r_3$, $d_i>d_j$, $|R_i|\geq 2$, $w_i = 1$)\\
\mbox{\small $24:\;$}\hspace{0.6cm} $i$ picks a random $h\in R_i \setminus \{j\}$.\\
\mbox{\small $25:\;$}\hspace{0.6cm} $\textbf{if} \hspace{0.1cm}  $($(j,h) \in E$)\\
\mbox{\small $26:\;$}\hspace{0.8cm} $E = E \setminus \{(i,h)\}$, $w_i=0$.\\
\mbox{\small $27:\;$}\hspace{0.6cm} $\textbf{end if} $\\

\mbox{\small $28:\;$}\hspace{0.4cm} $\textbf{else if} \hspace{0.1cm}  $($r=r_4$)\\
\mbox{\small $29:\;$}\hspace{0.6cm} Swap $w_i$ and $w_j$.\\

\mbox{\small $30:\;$}\hspace{0.4cm} \textbf{end if}\\
\mbox{\small $31:\;$}\hspace{0.2cm} \textbf{end for}\\
\mbox{\small $32:\;$}\textbf{end repeat}\\
\hline
\end{alg}
}
\end{center}
\end{spacing}

\begin{figure}[htb]
\begin{center}
\includegraphics[trim = 0mm 0mm 0mm 0mm,scale=0.5]{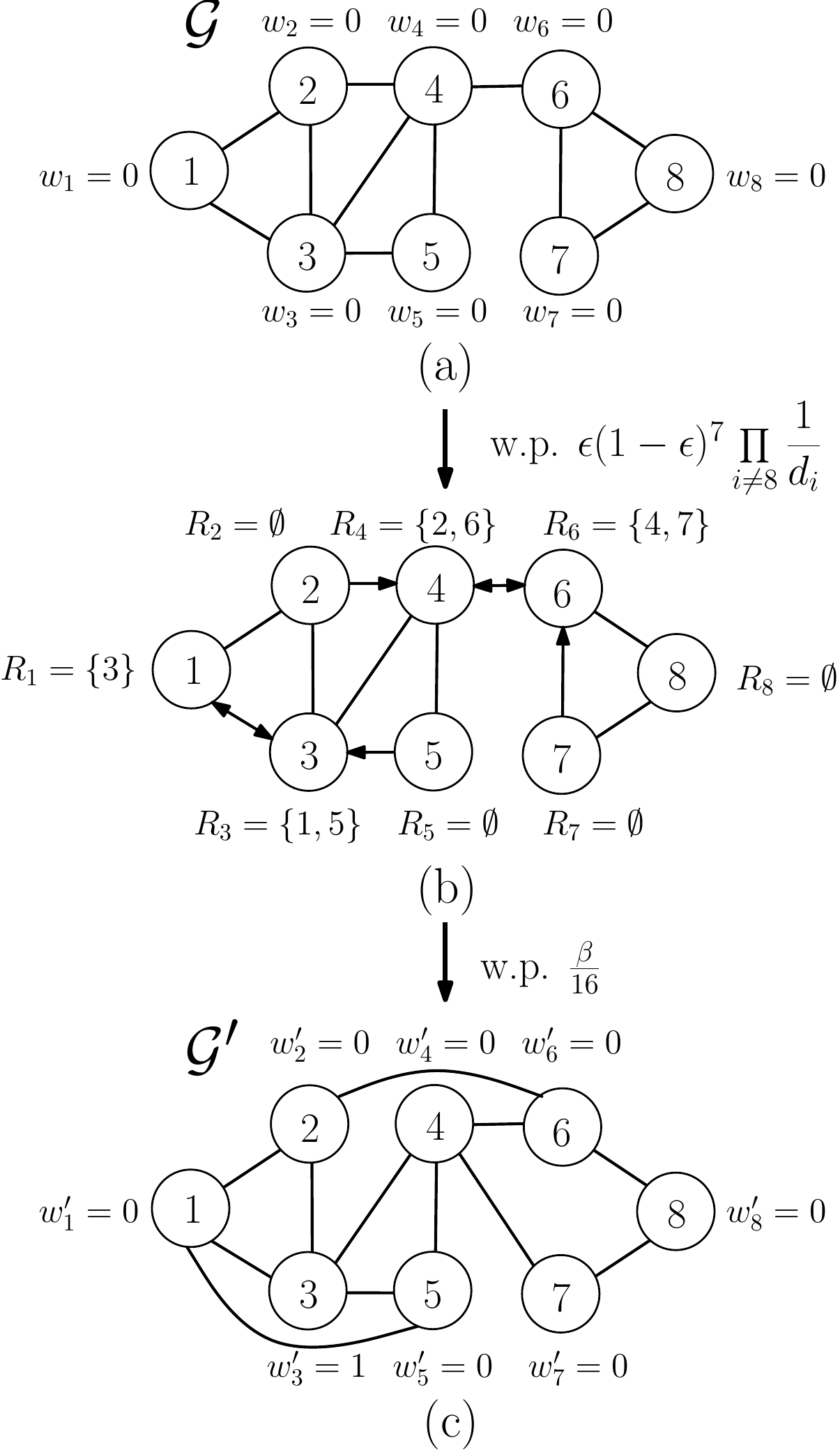}
\caption{Some steps of a feasible iteration of Algorithm \ref{RoundRandRegAlg} on $\mathcal{G}$ in (a) resulting in $\mathcal{G}'$ in (c) along with the probabilities of the corresponding events. In this example, each node other than 8 is active and picks a neighbor as illustrated in (b), where each arrow is pointed from a node to its chosen neighbor. Accordingly, (1,3) and (4,6) are the matched pairs. With a joint probability of $1/16$, node 3 picks $r_1$, and node 6 picks $r_2$ as the candidate rules for their respective matchings. Furthermore, since $R_3\setminus\{1\}=\{5\}$, $R_4\setminus\{6\}=\{2\}$,  and $R_6\setminus\{4\}=\{7\}$; node 3 picks node 5; node 4 picks node 2; and node 6 picks node 7 to rewire. Finally, the edge between node 3 and 5 is maintained with a probability $\beta$ ($w_3=0$). Hence, given (b), $\mathcal{G}'$ can emerge with a probability of $\beta/16$.}
\label{sample_flow}
\end{center}
\end{figure}

Note that Algorithm \ref{RoundRandRegAlg} maintains  connectivity due to Lemma \ref{cons_3r}. Let $k=\bar{d}(\mathcal{G}(0))$ be the average degree of the initial graph, $\mathcal{G}(0)=(V,E(0), \bold{0})$. Then, the average degree remains in $[k,k+2]$ due to Corollary \ref{edgewcor}. Furthermore, Algorithm \ref{RoundRandRegAlg} is memoryless since each iteration only depends on the current graph, and the probability of any feasible transition is independent of the past. Hence, it induces a Markov chain over the state space, $\mathbb{G}_{n,[k,k+2]}$, consisting of connected labeled graphs with $n$ nodes and average degrees contained in $[k, k+2]$, i.e.
\begin{equation}
\label{statespace3R}
\mathbb{G}_{n,[k,k+2]}=\{ \mathcal{G}=(V,E,w) \mid |V| = n, |E|= 0.5kn+\bold{1}^Tw\},
\end{equation}
 where $w_i \in \{0,1\}$ for every $i \in V$. 

 Let $\mu(t)$ denote the probability distribution over the state space at time $t$. Then, $\mu(t)$ satisfies
\begin{equation}
\label{markoveq}
\mu^T(t+1) = \mu^T(t)P^*,
\end{equation}
where $P^*$ is the corresponding probability transition matrix.  Accordingly, the probability of transition from any $\mathcal{G}$ to any $\mathcal{G}'$ is denoted by $P^*(\mathcal{G},\mathcal{G}')$.

By following Algorithm \ref{RoundRandRegAlg}, the agents concurrently modify their neighborhoods in accordance with $\Phi^*$ such that any feasible transformation occurs with a non-zero probability.  In other words, for any pair of graphs, $\mathcal{G}, \mathcal{G}' \in \mathbb{G}_{n,[k,k+2]}$, the corresponding transition probability, $P^*(\mathcal{G},\mathcal{G}')$, is non-zero if it is possible to transform $\mathcal{G}$ into $\mathcal{G}'$ in a single time step via  $\Phi^*$. 


\begin{lem}
\label{phicouples3R}
Let $\mathcal{G}, \mathcal{G}' \in \mathbb{G}_{n,[k,k+2]} $ be any pair of graphs. If $\mathcal{G}'$ can be reached from $\mathcal{G}$ in one step via $\Phi^*$, then $P^*(\mathcal{G},\mathcal{G}')>0$ .
\end{lem}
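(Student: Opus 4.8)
The plan is to show that every one-step $\Phi^*$-transformation of $\mathcal{G}$ into $\mathcal{G}'$ is realized by a single iteration of Algorithm \ref{RoundRandRegAlg} under one explicit configuration of the coin flips that has strictly positive probability. First I would unpack the hypothesis: reaching $\mathcal{G}'$ from $\mathcal{G}$ in one step via $\Phi^*$ means there is a concurrent application of $\Phi^*$, i.e. a finite family of rule applications $A_1,\dots,A_p$ (possibly $p=0$) where $A_\ell$ applies some $r_{k_\ell}\in\{r_1,r_2,r_3,r_4\}$ at a witness node set $W_\ell$, the $W_\ell$ are pairwise disjoint, every node outside $\bigcup_\ell W_\ell$ keeps its label and incident edges, and performing all of the $A_\ell$ simultaneously yields $\mathcal{G}'$. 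Since every rule of $\Phi^*$ involves at most four nodes and modifies only edges with both endpoints in its own witness (for $r_1$ the edges $(i,h)$ and $(j,h)$; for $r_2$ the edges among its four nodes; for $r_3$ one triangle edge; for $r_4$ none), disjointness of the $W_\ell$ makes the $A_\ell$ mutually non-interfering, so the decomposition is well defined and determines $\mathcal{G}'$ unambiguously.

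Next I would engineer the target event $\mathcal{E}$ inside one iteration. At line~3 I declare every node of $\bigcup_\ell W_\ell$ active and all other nodes inactive; since $\epsilon\in(0,1)$ this outcome has positive probability (when $p=0$ this is the all-inactive outcome and $\mathcal{G}'=\mathcal{G}$). For each $\ell$ I fix the roles dictated by $A_\ell$: an ordered initiator pair $i_\ell,j_\ell$ with $d_{i_\ell}\ge d_{j_\ell}$, the inequality being strict when $r_{k_\ell}\in\{r_1,r_3\}$ as forced by those rules' left-hand sides; the third node $h_\ell$ for $r_1,r_3$; and additionally the fourth node $f_\ell$ for $r_2$. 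At line~4 I prescribe the picks: $i_\ell$ picks $j_\ell$, $j_\ell$ picks $i_\ell$, $h_\ell$ picks $i_\ell$, and for $r_2$ also $f_\ell$ picks $j_\ell$; each is a legal neighbor choice in $\mathcal{G}$ with probability at least $1/\Delta(\mathcal{G})$. Because the witnesses are disjoint and the uninvolved nodes are inactive, this forces $R_{i_\ell}=\{j_\ell,h_\ell\}$, $R_{j_\ell}=\{i_\ell\}$ (or $\{i_\ell,f_\ell\}$ when $r_{k_\ell}=r_2$), the set of matched pairs to be exactly $\{(i_\ell,j_\ell)\}_{\ell=1}^p$, and each $h_\ell$ and $f_\ell$ to be an unmatched follower. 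Finally, at line~7 I have $\max\{i_\ell,j_\ell\}$ pick $r_{k_\ell}$ (probability $1/|\Phi^*|$), and when $r_{k_\ell}=r_1$ I fix $\beta'_{i_\ell}\ge\beta$ if $A_\ell$ is a pure rewire and $\beta'_{i_\ell}<\beta$ if $A_\ell$ adds an edge, both of positive probability, noting that in the latter case $w_{i_\ell}=0$ automatically since that is the precondition of the edge-adding branch of $r_1$.

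Then I would check that under $\mathcal{E}$ the algorithm executes exactly $A_1,\dots,A_p$ and nothing else, so its output is $\mathcal{G}'$. For each matched pair $(i_\ell,j_\ell)$ the guard of the $r_{k_\ell}$-branch holds: the cardinality conditions on $R_{i_\ell}$ (and $R_{j_\ell}$ for $r_2$) by construction, the degree inequality by the choice of $i_\ell$, and the edge/flag conditions ($(j_\ell,h_\ell)\notin E$ for $r_1$; both exclusive-neighbor non-edges for $r_2$; $(j_\ell,h_\ell)\in E$ and $w_{i_\ell}=1$ for $r_3$) because $W_\ell$ is a valid witness of $r_{k_\ell}$ in $\mathcal{G}$; the follower selections at lines~9, 19, 20, 24 are forced, as the relevant sets are singletons; and no other pair is matched. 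Therefore $\mathcal{E}\subseteq\{\mathcal{G}(t+1)=\mathcal{G}'\}$, whence $P^*(\mathcal{G},\mathcal{G}')\ge\Pr(\mathcal{E})$, and $\Pr(\mathcal{E})$ is a finite product of strictly positive factors (activities, neighbor picks, rule picks, and $\beta'$ outcomes), hence positive. The step I expect to need the most care is this last one, namely verifying that the prescribed choices create no unintended matched pair and fire no unintended branch; it rests squarely on the pairwise disjointness of the witnesses of a concurrent grammar application and on deliberately inactivating every node not participating in some $A_\ell$. Once that is pinned down, positivity is immediate.
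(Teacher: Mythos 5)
Your proposal is correct and follows essentially the same route as the paper's proof: decompose the one-step concurrent application into disjoint rule witnesses, activate exactly the participating nodes while deactivating all others, prescribe the neighbor picks so that each witness yields exactly its intended matched pair, have that pair select the intended rule, and conclude positivity as a finite product of positive probabilities. Your version is somewhat more explicit than the paper's (which delegates the neighbor-pick pattern to a figure), notably in handling the $\beta'$ coin flip for $r_1$ and in verifying that no unintended matchings or branches fire, but the underlying argument is the same.
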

\begin{proof}
%
%

Let $\mathcal{G}'$ be reachable from $\mathcal{G}$ in one step via $\Phi^*$, and let $G =\{g_1, g_2, \hdots \}$ denote the corresponding set of disjoint subgraphs of $\mathcal{G}$ to be transformed to reach  $\mathcal{G}'$.  Note that for each $g \in G$ there is a $r=(g_l, g_r) \in \Phi^*$ satisfying $g \simeq g_l$. In order to show that $P^*(\mathcal{G},\mathcal{G}')>0$, we present a feasible flow of Algorithm \ref{RoundRandRegAlg} that transforms $\mathcal{G}$ by applying the corresponding $r \in \Phi^*$ to each $g \in G$. For each $g \in G$, let each node in $g$ be active and pick a neighbor as illustrated in Fig. \ref{prooffig2}. 
\begin{figure}[h!]
\begin{center}
\includegraphics[trim = 0mm 0mm 0mm 0mm,scale=0.4]{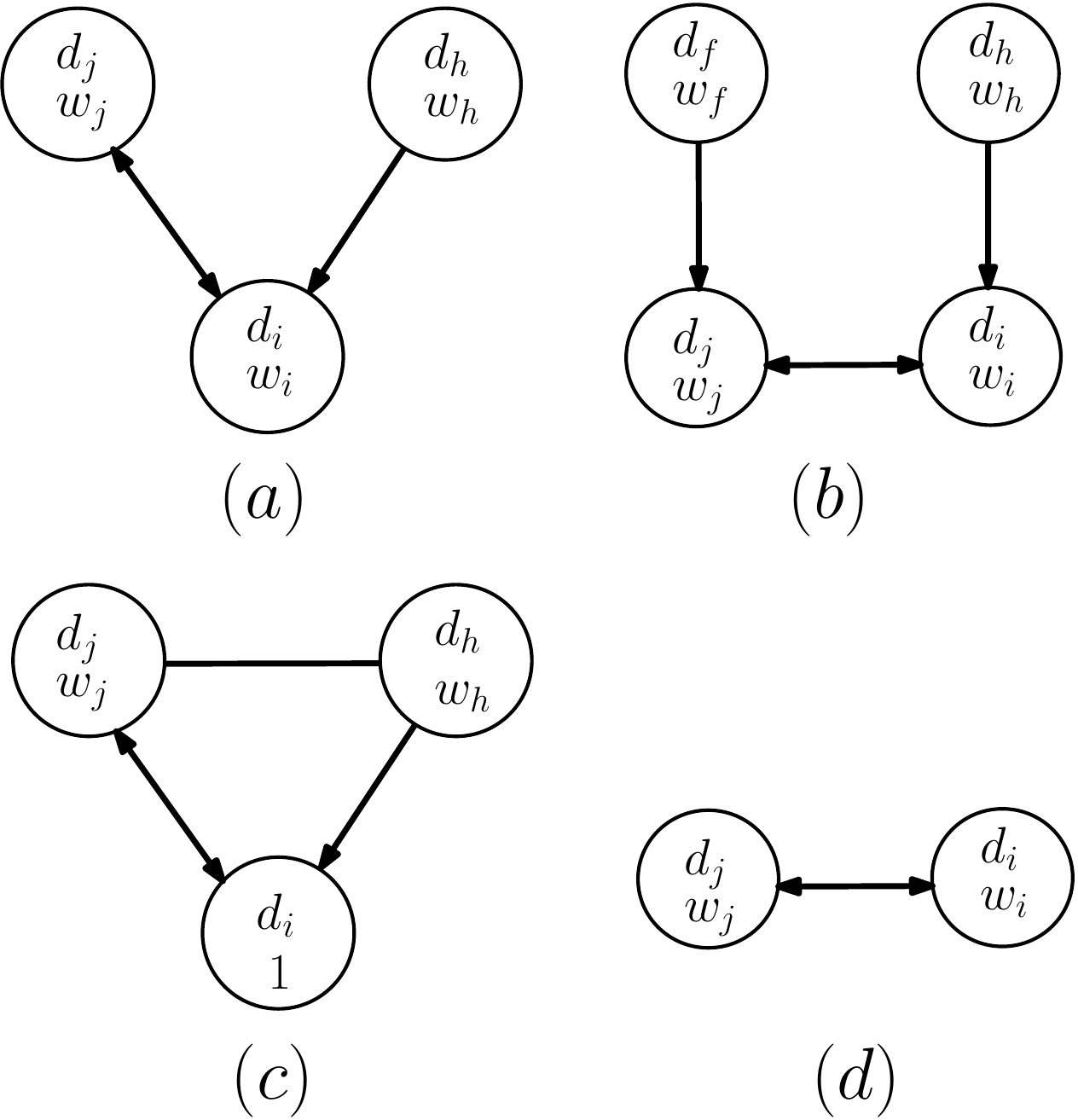}
\caption{An arrow is pointed from each agent to the neighbor it picked. For each $g\in G$, the nodes in $g$ have non-zero probability to pick their neighbors as shown in (a) if $r=r_1$, (b) if $r=r_2$, (c) if $r=r_3$, and (d) if $r=r_4$. }
\label{prooffig2}
\end{center}
\end{figure}

Furthermore, let any node that is not included in any $g \in G$ be inactive, which ensures that only the subgraphs in $G$ will be transformed. Finally, let each $g$ pick the corresponding $r$ as the candidate ruler to execute. In that case, the agents are guaranteed to only apply the corresponding $r \in \Phi^*$ to each $ g \in G$. Hence, the corresponding transformation has a non-zero probability in $P^*$.

 \end{proof}

The state space, $\mathbb{G}_{n,[k,k+2]}$ can be represented as the union of two disjoint sets, $\mathbb{G}^0_{n,[k,k+2]}$ (regular graphs) and  $\mathbb{G}^+_{n,[k,k+2]}$ (non-regular graphs), defined as
  \begin{eqnarray}
\label{statespaceRI3R}
\mathbb{G}^0_{n,[k,k+2]}&=&\{ \mathcal{G}\in \mathbb{G}_{n,[k,k+2]} \mid  f(\mathcal{G}) = 0 \}, \\
\mathbb{G}^+_{n,[k,k+2]} &=& \mathbb{G}_{n,[k,k+2]} \setminus \mathbb{G}^0_{n,[k,k+2]}.
\end{eqnarray}

In light of Lemma \ref{reg_feas}, $\mathbb{G}^0_{n,[k,k+2]} \neq \emptyset$ for $k>2$. Let $M$ denote the set of all such $m$, i.e.
\begin{equation}
\label{mvals}
M= \{m \in \mathbb{N} \mid k \leq m \leq k+2, 0.5m|V| \in \mathbb{N} \}.
\end{equation}

The set of regular graphs, $\mathbb{G}^0_{n,[k,k+2]}$, can be written as the union of some disjoint sets as
\begin{equation}
\label{reg_part}
\mathbb{G}^0_{n,[k,k+2]}= \bigcup_{m \in M} \mathbb{G}^0_{n,m}, 
\end{equation}
where each $\mathbb{G}^0_{n,m}$ is the set of $m$-regular graphs defined as
\begin{equation}
\label{reg_part2}
\mathbb{G}^0_{n,m} = \{\mathcal{G} \in \mathbb{G}^0_{n,[k,k+2]} \mid \bar{d}(\mathcal{G})=m \}.
\end{equation}

In the remainder of this section, the limiting behavior of $P^*$ is analyzed. In light of Lemma \ref{reg_feas}, if $k>2$, then any connected $\mathcal{G}(0)=(V,E(0), \bold{0})$ almost surely converges to  $\mathbb{G}^0_{n,[k,k+2]}$. Next, we show that, for each $m \in M$, $\mathbb{G}^0_{n,m}$ is a closed communicating class.

 \begin{lem}
\label{ComClass3R} If $k>2$, then $\mathbb{G}^0_{n,m}$ is a closed communicating class of $P^*$ for each $m \in M$.  
\end{lem}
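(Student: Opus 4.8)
The plan is to establish two things for each fixed $m \in M$: (i) $\mathbb{G}^0_{n,m}$ is closed under $P^*$, i.e. no transition leaves it; and (ii) any two $m$-regular graphs in $\mathbb{G}^0_{n,m}$ communicate, i.e. each is reachable from the other with positive probability. Closure is essentially immediate from the results already proved. By Lemma \ref{regstaysreg}, any feasible trajectory of $(\mathcal{G},\Phi^*)$ starting from an $m$-regular graph stays $m$-regular, so in one step from $\mathcal{G} \in \mathbb{G}^0_{n,m}$ only $m$-regular graphs are reachable via $\Phi^*$; combined with Lemma \ref{phicouples3R}, which says $P^*(\mathcal{G},\mathcal{G}')>0$ only when $\mathcal{G}'$ is $\Phi^*$-reachable from $\mathcal{G}$ in one step, we conclude $P^*(\mathcal{G},\mathcal{G}')>0$ implies $\mathcal{G}' \in \mathbb{G}^0_{n,m}$. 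Hence $\mathbb{G}^0_{n,m}$ is closed. (One should note that the average degree cannot jump between different elements of $M$ either, since $r_1$ and $r_3$ are inapplicable on a regular graph, so the "closed" part really confines the chain to the single $m$-regular slice, not to the whole union $\mathbb{G}^0_{n,[k,k+2]}$.)

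The substantive part is showing that $\mathbb{G}^0_{n,m}$ is a single communicating class. On a regular graph only $r_2$ and $r_4$ are active, and $r_4$ only permutes binary flags without touching the edge set, so connectivity within the edge-graph structure is governed by $r_2$ (the neighbor-swap move). The key claim to invoke is that the $r_2$ move, applied to connected $m$-regular graphs, generates a transitive action on $\mathbb{G}^0_{n,m}$: from any connected $m$-regular graph one can reach any other connected $m$-regular graph by a finite sequence of $r_2$ swaps. This is precisely the classical fact underlying the "flip" or "switch" Markov chain on regular graphs (as used in \cite{Mahlmann05, Jerrum90}); the only wrinkle here is the connectivity constraint, but Lemma \ref{cons_3r} guarantees every $r_2$ application preserves connectivity, and the standard argument that $r_2$-swaps connect the space of regular graphs can be carried out while maintaining connectivity (e.g. by choosing swaps that never isolate a component, which is always possible when $m \geq 3$). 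Then, given a target $\mathcal{G}' = (V,E',w') \in \mathbb{G}^0_{n,m}$ reachable in edge-structure from $\mathcal{G}=(V,E,w)$, one first applies the $r_2$-sequence to match the edge sets, and since $\mathbb{G}^0_{n,m} \subseteq \mathbb{G}^0_{n,[k,k+2]} \subseteq \mathbb{G}_{n,[k,k+2]}$ the flag vector $w'$ satisfies $\mathbf{1}^Tw' = |E'| - 0.5kn = |E| - 0.5kn = \mathbf{1}^Tw$, so $w$ and $w'$ have the same number of $1$'s and $w$ can be permuted into $w'$ by a finite sequence of $r_4$ exchanges along paths in the (connected) graph. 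Each of these intermediate $\Phi^*$-transformations has positive probability by Lemma \ref{phicouples3R}, so $P^{*k}(\mathcal{G},\mathcal{G}')>0$ for some $k$, and symmetrically $\mathcal{G}$ is reachable from $\mathcal{G}'$; hence all of $\mathbb{G}^0_{n,m}$ is one communicating class, which together with closure proves the lemma.

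The main obstacle I anticipate is the transitivity-of-$r_2$-swaps-under-the-connectivity-constraint step: the unconstrained statement (any two $m$-regular graphs on $n$ vertices are joined by edge-swaps) is classical, but showing the path of intermediate graphs can be kept connected requires a short argument. The clean way is to argue by induction on the symmetric difference $E \triangle E'$, showing that whenever $\mathcal{G} \ne \mathcal{G}'$ are both connected and $m$-regular there is an $r_2$-swap strictly decreasing $|E \triangle E'|$ while keeping the graph connected; when $m \geq 3$ a swap that would disconnect the graph can always be replaced by an alternative swap toward $E'$ because a cut vertex/edge obstruction leaves enough free edges elsewhere. I would cite \cite{Mahlmann05} for the fact that the connected-regular-graph switch chain is irreducible and only sketch why Lemma \ref{cons_3r} lets us import that irreducibility directly, rather than reproving it from scratch.
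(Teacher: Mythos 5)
Your proposal is correct and follows essentially the same route as the paper: closure via Lemma \ref{regstaysreg} (only $r_2$ and $r_4$ act on a regular graph and both preserve $m$-regularity), and communication via the irreducibility of the $r_2$ switch on connected $m$-regular graphs combined with $r_4$ permutations of the flag vector $w$ along paths. The only difference is that the paper disposes of what you call the main obstacle---reaching any connected $m$-regular edge structure from any other by $r_2$ swaps---by simply invoking Theorem \ref{limdist}, whose uniform limiting distribution over the connected $m$-regular graphs already entails that irreducibility, rather than re-importing it from \cite{Mahlmann05} or re-proving it by induction on the symmetric difference.
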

\begin{proof}
In Light of Lemma \ref{regstaysreg}, only $r_2$ and $r_4$ is applicable on any $m$-regular graph, and any resulting graph is $m$-regular. Hence, once the system reaches any $\mathbb{G}^0_{n,m}$, it stays in $\mathbb{G}^0_{n,m}$. Furthermore, in light of Theorem \ref{limdist}, any $m$-regular structure can be reached from any other via $r_2$. Moreover, since all the graphs in $\mathbb{G}^0_{n,m}$ are connected, any permutation of the elements in $w$ is also reachable via $r_4$ for each graph structure in $\mathbb{G}^0_{n,m}$. Hence, each $\mathbb{G}^0_{n,m}$ is a closed communicating class of $P^*$.\end{proof} 

The limiting behavior of Algorithm \ref{RoundRandRegAlg} depends on the limiting behavior of $P^*$ in each closed communicating class. Note that for each $\mathbb{G}^0_{n,m}$, there is a unique stationary distribution, $\mu^*_m$, of $P^*$ whose support is $\mathbb{G}^0_{n,m}$. Next, it is shown that each $\mu^*_m$ is a limiting distribution that is uniform over $\mathbb{G}^0_{n,m}$. To this end, first it is shown that the transitions between any two regular graphs are symmetric. 

\begin{lem}
\label{SymReg3R}
For any $m \in M$, let $\mathcal{G}, \mathcal{G}' \in \mathbb{G}^0_{n,m}$ be any pair of $m$-regular graphs. Then,
\begin{equation}
P^*(\mathcal{G}, \mathcal{G}' ) = P^*(\mathcal{G}', \mathcal{G}). 
\end{equation}
\end{lem}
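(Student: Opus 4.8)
The plan is to exhibit a probability-preserving bijection between the set of one-step executions of Algorithm \ref{RoundRandRegAlg} started at $\mathcal{G}$ that result in $\mathcal{G}'$ and the corresponding set of executions started at $\mathcal{G}'$ that result in $\mathcal{G}$. Since both graphs lie in $\mathbb{G}^0_{n,m}$, such a bijection immediately gives $P^*(\mathcal{G},\mathcal{G}')=P^*(\mathcal{G}',\mathcal{G})$ (with both sides $0$ when no such execution exists), because the bijection will match executions of equal probability.

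First, by Lemma \ref{regstaysreg}, on an $m$-regular graph the only rules in $\Phi^*$ that can induce any change are $r_2$ and $r_4$ (the guards $d_i>d_j$ of $r_1$ and $r_3$ are never met), and the resulting graph is again $m$-regular. Encode an execution $\omega$ by: the set of active agents; each active agent's chosen neighbor (line~4); each matched pair's chosen rule (line~7); and, for each matched pair that chose $r_2$ with $|R_i|,|R_j|\geq 2$, the chosen $h$ and $f$ (lines~19--20). Since each active agent picks exactly one neighbor, matched pairs are vertex-disjoint, a follower is never matched, and a node is the follower of at most one pair; consequently the four vertices $\{i,j,h,f\}$ of every firing $r_2$ operation and the two vertices of every firing $r_4$ operation are pairwise disjoint across all such operations, and $\mathcal{G}'$ is obtained from $\mathcal{G}$ by performing exactly these disjoint local edits.

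Given such an $\omega$ yielding $\mathcal{G}'$, define $\omega'$ on $\mathcal{G}'$ by: keeping the same active set and the same rule choice for every matched pair; for a firing $r_4$ pair, keeping its execution unchanged (its reverse is the same $r_4$, which swaps the flags back); for a firing $r_2$ operation on $(i,j,h,f)$ — in which $i$ picked $j$, $j$ picked $i$, $h$ picked $i$, $f$ picked $j$, $i$ chose $h$, $j$ chose $f$ — having, in $\omega'$, $h$ pick $j$, $f$ pick $i$, $i$ choose $f$, and $j$ choose $h$; and for all non-firing matched pairs and all remaining active agents, keeping all choices unchanged. One then checks: (i) $\omega'$ is a legal execution on $\mathcal{G}'$, the key facts being $(i,j)\in E'$, $f\in\mathcal N_i(\mathcal{G}')$ and $h\in\mathcal N_j(\mathcal{G}')$ (the edges $r_2$ just created), and that the reverse $r_2$ fires because $(i,h),(j,f)\notin E'$; (ii) $\omega'$ yields exactly $\mathcal{G}$, since the reverse $r_2$ rewiring deletes $\{(i,f),(j,h)\}$ and restores $\{(i,h),(j,f)\}$, while the non-firing operations and reverse $r_4$ restore the rest, including $w$; and (iii) $\Pr[\omega']=\Pr[\omega]$. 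For (iii): the active/inactive factors are identical; each line-4 choice has probability $1/m$ in both graphs because every vertex has degree $m$ in both $\mathcal{G}$ and $\mathcal{G}'$; each matched pair contributes the same $1/|\Phi^*|$; and for the $h,f$ choices one verifies $R_i(\mathcal{G}',\omega')=(R_i(\mathcal{G},\omega)\setminus\{h\})\cup\{f\}$ and symmetrically for $R_j$, so $|R_i|,|R_j|$ — hence the factors $1/(|R_i|-1)$, $1/(|R_j|-1)$ — are preserved, and they are trivially preserved for the structurally untouched non-firing $r_2$ pairs. Finally, applying the same construction with the roles of $\mathcal{G}$ and $\mathcal{G}'$ exchanged returns $\omega$, so $\omega\mapsto\omega'$ is a bijection, and summing $\Pr[\omega]$ over all $\omega$ yielding $\mathcal{G}'$ gives the claim.

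The main obstacle will be the bookkeeping in (iii): confirming that relabeling the followers of each $r_2$ gadget leaves every quantity entering the execution probability ($d_v$, $|\Phi^*|$, $|R_i|$, $|R_j|$) unchanged, and that the disjointness of gadgets (and their separation from non-firing matched pairs and their followers) used throughout is genuinely guaranteed by the ``each active agent picks one neighbor'' structure. The degree part is automatic since both graphs are $m$-regular; the $R$-set cardinalities are the one place where one must track precisely which agents alter their line-4 pick and argue that the additions to and removals from each relevant $R$-set cancel.
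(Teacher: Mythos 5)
Your proposal is correct and follows essentially the same route as the paper's proof: both arguments first note that only $r_2$ and $r_4$ can fire on an $m$-regular graph and then construct a probability-preserving bijection between one-step executions realizing $\mathcal{G}\to\mathcal{G}'$ and those realizing $\mathcal{G}'\to\mathcal{G}$, by re-pointing each rewired follower to the node its former pick had picked and checking that the activation pattern, matching, rule choices, and the cardinalities $|R_i|,|R_j|$ are all preserved. Your write-up is somewhat more explicit than the paper's about the disjointness of the local gadgets and the $R$-set bookkeeping, but the underlying idea is identical.
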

\begin{proof}
For any $m \in M$, let $\mathcal{G}, \mathcal{G}' \in \mathbb{G}^0_{n,m}$ be any pair of $m$-regular graphs. Note that $r_1$ and $r_3$ are not applicable on any graph in $\mathbb{G}^0_{n,m}$ since all the nodes have equal degrees. Hence, any transition from $\mathcal{G}$ to $\mathcal{G}'$ is only via $r_2$ or $r_4$. Since both $r_2$ and $r_4$ are reversible,  $P^*(\mathcal{G}, \mathcal{G}' )>0$ if and only if $P^*(\mathcal{G}', \mathcal{G})>0$.

Let us consider an arbitrary execution of Algorithm \ref{RoundRandRegAlg}, where $\mathcal{G}$ is transformed into $\mathcal{G}'$. Let $u$ be the corresponding vector of randomly picked neighbors in line 4 of Algorithm \ref{RoundRandRegAlg} (let $u_i=null$ if $i$ is inactive). For each node, $i$, let $R_i(u)$ be the set of nodes that picked $i$, and let $M(u)= \{(i,j) \mid i \in R_j(u), j\in R_i(u)\}$ be the set of matched pairs. In the remainder of the proof, it is shown that for any such feasible execution there exists an equally likely execution of Algorithm \ref{RoundRandRegAlg} that transforms $\mathcal{G}'$ back to $\mathcal{G}$.  For $\mathcal{G'}=(V,E',w')$, 
consider the vector, $u'$, whose entries are

\begin{equation}
\label{vprime}
u'_i= \left\{\begin{array}{ll}u_i&\mbox{ if $u_i =null$ or $(i,u_i) \in E'$}, 
 \\ u_{u_i} &\mbox{ otherwise. }\end{array}\right.
\end{equation} 
Note that $\Pr[u]=\Pr[u']$ since the inactive nodes will remain inactive with the same probability, and each active node picks a neighbor uniformly at random. Furthermore, $M(u)=M(u')$. Let each $(i,j)\in M(u)$ randomly choose the same candidate rule, $r$, they executed in the transition from $\mathcal{G}$ to  $\mathcal{G}'$. As such, if $r=r_4$, then $i$ and $j$ will swap $w_i$ and $w_j$ back. On the other hand, if $r=r_2$, then $i$ and $j$ will reverse the neighbor-swapping in the transition from $\mathcal{G}$ to $\mathcal{G}'$ with the same probability (lines 12-17 in Algorithm \ref{RoundRandRegAlg}) since $|R_i(u)| = |R_i(u')|$ and $|R_j(u)| = |R_j(u')|$ for every $(i,j)\in M(u)$. As such, all the local transformations from $\mathcal{G}$ to $\mathcal{G}'$ can be reversed with the same probability under Algorithm \ref{RoundRandRegAlg}. Consequently, $P^*(\mathcal{G}, \mathcal{G}' ) = P^*(\mathcal{G}',\mathcal{G})$.\end{proof}

 \begin{lem}
\label{limincom} For any $m \in M$, there is a unique limiting distribution, $\mu^*_m$, of $P^*$ satisfying 
\begin{equation}
\label{limdist_incom_eq}
\mu^*_m(\mathcal{G})= \left\{\begin{array}{ll}1/|\mathbb{G}^0_{n,m}|&\mbox{ if $\mathcal{G} \in \mathbb{G}^0_{n,m}$}, 
 \\\;\;\; 0&\mbox{ otherwise. }\end{array}\right.
\end{equation}
\end{lem}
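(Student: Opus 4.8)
The plan is to apply the convergence theorem for finite Markov chains to the restriction of $P^*$ to the class $\mathbb{G}^0_{n,m}$. First I would invoke Lemma \ref{ComClass3R}: since $\mathbb{G}^0_{n,m}$ is a closed communicating class of $P^*$, the restriction of $P^*$ to $\mathbb{G}^0_{n,m}$ is itself a well-defined stochastic matrix that is irreducible on $\mathbb{G}^0_{n,m}$. Next I would use Lemma \ref{SymReg3R}: because $P^*(\mathcal{G},\mathcal{G}') = P^*(\mathcal{G}',\mathcal{G})$ for all $\mathcal{G},\mathcal{G}' \in \mathbb{G}^0_{n,m}$, this restricted matrix is symmetric; since the class is closed each of its rows sums to $1$, and symmetry then forces each column to sum to $1$ as well, so the restricted matrix is doubly stochastic. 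Consequently the uniform distribution over $\mathbb{G}^0_{n,m}$ is a stationary distribution of the restricted chain.

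The remaining ingredient is aperiodicity. Here I would note that with probability $\epsilon^{n}>0$ every agent is inactive in a given iteration of Algorithm \ref{RoundRandRegAlg}, in which case neither the edge set nor the labels change; hence $P^*(\mathcal{G},\mathcal{G})>0$ for every $\mathcal{G}\in\mathbb{G}^0_{n,m}$, so the restricted chain is aperiodic. A finite, irreducible, aperiodic Markov chain has a unique stationary distribution, and the distribution of the chain converges to it from every initial state in the class. Combined with the previous paragraph, this unique stationary (hence limiting) distribution must be the uniform distribution over $\mathbb{G}^0_{n,m}$. Extending by zero outside $\mathbb{G}^0_{n,m}$ yields the claimed $\mu^*_m$ of (\ref{limdist_incom_eq}), which is a limiting distribution of $P^*$ since any trajectory started inside $\mathbb{G}^0_{n,m}$ stays there by Lemma \ref{ComClass3R} and converges to it.

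I do not expect a serious obstacle: the heavy lifting — closedness, communication, and reversibility of transitions — is already packaged in Lemmas \ref{ComClass3R} and \ref{SymReg3R}. The only points needing a sentence of care are the passage from ``symmetric transition matrix on a closed class'' to ``uniform is stationary'' (i.e.\ verifying double stochasticity), and the brief aperiodicity remark needed to upgrade ``stationary'' to ``limiting'' and to secure uniqueness. If one prefers to avoid citing a black-box convergence theorem, an equivalent route is to argue via Perron--Frobenius that the symmetric, irreducible, aperiodic stochastic matrix on $\mathbb{G}^0_{n,m}$ has $1$ as a simple eigenvalue with every other eigenvalue of modulus strictly less than $1$, and that the left eigenvector for eigenvalue $1$ is the all-ones vector, again giving the uniform limiting distribution.
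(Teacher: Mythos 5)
Your proposal is correct and follows essentially the same route as the paper's own proof: irreducibility of the restricted chain from Lemma \ref{ComClass3R}, aperiodicity from the positive probability of all agents being inactive, and symmetry from Lemma \ref{SymReg3R} giving double stochasticity and hence the uniform limiting distribution. Your explicit verification that symmetry plus closedness yields double stochasticity is a slightly more careful spelling-out of a step the paper states in one line, but the argument is the same.
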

\begin{proof}
For any $m \in M$, $\mathbb{G}^0_{n,m}$ is a closed communicating class due to Lemma \ref{ComClass3R}. As such, for each  $\mathbb{G}^0_{n,m}$, there exists a unique stationary distribution, $\mu^*_m$, whose support is $\mathbb{G}^0_{n,m}$. Let $P^{*}_m$ be the $|\mathbb{G}^0_{n,m}|$ by $|\mathbb{G}^0_{n,m}|$ probability transition matrix that only represents the transitions within $\mathbb{G}^0_{n,m}$. Due to Lemma \ref{ComClass3R}, $P^*_m$ is irreducible. Also $P^*_m$ is aperiodic since $P^*(\mathcal{G},\mathcal{G})>0$ for every $\mathcal{G}$ (for instance, there is a non-zero probability of all the nodes being inactive). As such, the corresponding stationary distribution, $\mu^{*}_m$, is a limiting distribution. Furthermore, due to Lemma \ref{SymReg3R}, $P^*_m$ is symmetric, and it is consequently doubly stochastic. As a result, $\mu^*_m$ is uniform over $\mathbb{G}^0_{n,m}$.

\end{proof}

\begin{theorem}
\label{limdist3R} For any connected $\mathcal{G}=(V,E,\bold{0})$ satisfying $\bar{d}(\mathcal{G})>2$, $P^*$ leads to a limiting distribution, $\mu^*_\mathcal{G}$, given as
\begin{equation}
\label{limdisteq}
\mu^*_\mathcal{G}= \sum_{m\in M} \Pr[\mathcal{G} \rightarrow \mathbb{G}^0_{n,m}]\mu^*_m,
\end{equation}
where $\Pr[\mathcal{G} \rightarrow \mathbb{G}^0_{n,m}]$ is the probability that the chain initialized at $\mathcal{G}$ ever enters the set of $m$-regular graphs, $\mathbb{G}^0_{n,m}$, and each $\mu^*_m$ is uniform over its support, $\mathbb{G}^0_{n,m}$.
\end{theorem}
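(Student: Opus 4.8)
The plan is to read off the long-run behavior of $P^*$ from the standard theory of finite Markov chains, using the earlier lemmas to identify the recurrent structure. Since $\mathbb{G}_{n,[k,k+2]}$ is finite, it suffices to determine the transient states, the closed communicating classes, and the within-class limits. First I would show that every non-regular graph reachable from $\mathcal{G}$ is a transient state of $P^*$: by Lemma \ref{reg_feas} (more precisely, by the argument in its proof, which applies to every non-regular graph reachable from $\mathcal{G}$) there is a finite sequence of $\Phi^*$-transformations leading to some $m$-regular graph with $m\in M$, and by Lemma \ref{phicouples3R} each transformation in this sequence has strictly positive probability under $P^*$; since $\mathbb{G}^0_{n,m}$ is never left once entered (Lemma \ref{regstaysreg}), the chain started at any such non-regular state has positive probability of never returning to it, hence is transient. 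Consequently, among the states reachable from $\mathcal{G}$, the recurrent ones all lie in $\bigcup_{m\in M}\mathbb{G}^0_{n,m}$, and by Lemma \ref{ComClass3R} these are exactly the closed communicating classes.

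Next I would invoke the fact that a finite Markov chain is, with probability one, absorbed into one of its closed communicating classes; write $\Pr[\mathcal{G}\rightarrow\mathbb{G}^0_{n,m}]$ for the probability that the chain started at $\mathcal{G}$ is absorbed into $\mathbb{G}^0_{n,m}$, so that $\sum_{m\in M}\Pr[\mathcal{G}\rightarrow\mathbb{G}^0_{n,m}]=1$ by Lemma \ref{reg_feas}. Conditioning on the absorbing class and on the (random) time $s$ and state $\mathcal{H}$ of first entry into it, I would apply Lemma \ref{limincom}: restricted to $\mathbb{G}^0_{n,m}$ the chain is irreducible and aperiodic, so its $(t-s)$-step transition probabilities converge to $\mu^*_m(\mathcal{G}')=1/|\mathbb{G}^0_{n,m}|$ as $t\to\infty$, uniformly over the entry state $\mathcal{H}$ and independently of $s$. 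Summing over all entry times and entry states, the mass that $\mu(t)$ places on $\mathbb{G}^0_{n,m}$ redistributes in the limit according to $\Pr[\mathcal{G}\rightarrow\mathbb{G}^0_{n,m}]\,\mu^*_m$, while the total mass on the transient states goes to zero. Adding the contributions of the finitely many classes gives $\mu(t)\to\sum_{m\in M}\Pr[\mathcal{G}\rightarrow\mathbb{G}^0_{n,m}]\,\mu^*_m$, which is the asserted $\mu^*_\mathcal{G}$.

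The main obstacle is making this last step fully rigorous, i.e.\ justifying the interchange of the limit $t\to\infty$ with the sum over entry times. This is the classical absorption-plus-mixing computation: decompose $(P^*)^t(\mathcal{G},\mathcal{G}')$ for $\mathcal{G}'\in\mathbb{G}^0_{n,m}$ as $\sum_{s\le t}\sum_{\mathcal{H}\in\mathbb{G}^0_{n,m}} h_s(\mathcal{H})\,(P^*_m)^{t-s}(\mathcal{H},\mathcal{G}')$, where $h_s(\mathcal{H})$ is the probability of first hitting $\mathbb{G}^0_{n,m}$ at time $s$ in state $\mathcal{H}$ and $\sum_{s,\mathcal{H}}h_s(\mathcal{H})=\Pr[\mathcal{G}\rightarrow\mathbb{G}^0_{n,m}]$, then split the sum at a truncation time, bound the tail of $\sum_s h_s$ using the geometrically fast absorption out of the transient part, and use the uniform convergence of the powers of $P^*_m$ on the head. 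Everything else — transience of the non-regular states, closedness of the $\mathbb{G}^0_{n,m}$, and uniformity of each $\mu^*_m$ — is already supplied by Lemmas \ref{reg_feas}, \ref{phicouples3R}, \ref{regstaysreg}, \ref{ComClass3R}, and \ref{limincom}, so the remaining work is just this standard bookkeeping.
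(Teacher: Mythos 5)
Your proposal is correct and follows essentially the same route as the paper's own proof: almost-sure absorption into one of the closed communicating classes $\mathbb{G}^0_{n,m}$ (Lemma \ref{reg_feas} together with Lemmas \ref{regstaysreg} and \ref{ComClass3R}), followed by convergence to the uniform distribution $\mu^*_m$ within each class (Lemma \ref{limincom}), yielding the convex combination weighted by the absorption probabilities. The only difference is one of detail: the paper states this in three sentences, whereas you additionally spell out the transience of the non-regular states and the standard absorption-plus-mixing bookkeeping needed to justify interchanging the limit with the sum over entry times.
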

\begin{proof}
Due to  Lemma \ref{reg_feas}, the Markov chain initialized at any connected $\mathcal{G}=(V,E,\bold{0})$ satisfying $\bar{d}(\mathcal{G})>2$ almost surely enters a closed communicating class of $m$-regular graphs, $\mathbb{G}^0_{n,m}$, such that $m\in M$. Furthermore, each $\mathbb{G}^0_{n,m}$ has a uniform limiting distribution, $\mu^*_m$ as given in Lemma \ref{limincom}. Hence, $P^*$ leads to the convex combination of limiting distributions, $\mu^*_m$, where each $\mu^*_m$ is weighted by the probability that the chain starting at $\mathcal{G}$ ever enters $\mathbb{G}^0_{n,m}$.
\end{proof} 

In light of Theorem \ref{limdist3R}, Algorithm \ref{RoundRandRegAlg} asymptotically transforms any connected initial graph, $\mathcal{G}=(V,E,\bold{0})$, satisfying $\bar{d}(\mathcal{G})>2$  into a random $m$-regular graph for some $m\in M$ as given in (\ref{mvals}). When $M$ is not a singleton, the probability of reaching each $\mathbb{G}^0_{n,m}$ depends on the initial graph as well as the algorithm parameters $\epsilon, \beta \in (0,1)$. However, the graphs observed in the limit are guaranteed to be expanders since $\bar{d}(\mathcal{G})>2$ implies $m\geq 3$ for every $m\in M$.


%
%
%

\section{Simulation Results}
\label{sims}
In this section, we present some simulation results for the proposed scheme. An arbitrary connected graph, $\mathcal{G}(0)=(V,E,\bold{0})$, is generated with $100$ nodes and $135$ edges. As such, the initial average degree is $\bar{d}(\mathcal{G}(0))=2.7$. The interaction graph is evolved via Algorithm \ref{RoundRandRegAlg} with $\epsilon = \beta =0.01$ for a period of 100000 steps. Since $\bar{d}(\mathcal{G}(0))=2.7$, the average degree is guaranteed to remain in $[2.7,4.7]$ for any feasible trajectory, as given in Corollary \ref{edgewcor}. Since the number of nodes is even, both integers in this interval are feasible values for the average degree, i.e. $M=\{3,4\}$. In light of Theorem \ref{limdist3R}, the graph is expected to become either a random $3$-regular graph or a random $4$-regular graph. In the presented simulation, the graph reaches a 3-regular graph after 45123 time steps. Note that once the system enters $\mathbb{G}^0_{100,3}$, both $f(\mathcal{G}(t))$ and $\bar{d}(\mathcal{G}(t))$ are stationary. Fig.
\ref{init2rr3R} illustrates $\mathcal{G}(0)$ and $\mathcal{G}(100000)$, which have the algebraic connectivities of 0.02 and 0.25,
respectively.  The values of the degree range, $f(\mathcal{G}(t))$, and the average degree, $\bar{d}(\mathcal{G}(t))$, for the first 50000 steps are illustrated in Fig. \ref{dbarrange3R}. Once a 3-regular graph is reached, the graph keeps randomizing in $\mathbb{G}^0_{100,3}$ via $r_2$ and $r_4$. The evolution of the algebraic connectivity throughout this simulation is shown in Fig. \ref{algcon3R}. As expected from random 3-regular graphs, the algebraic connectivity of the network is observed to be at least $3-2\sqrt{2}$ with a very high probability after a sufficient amount of time.

\begin{figure*}[H!]
\begin{center}
\includegraphics[trim = 0mm 60mm 0mm 60mm,clip,scale=0.4]{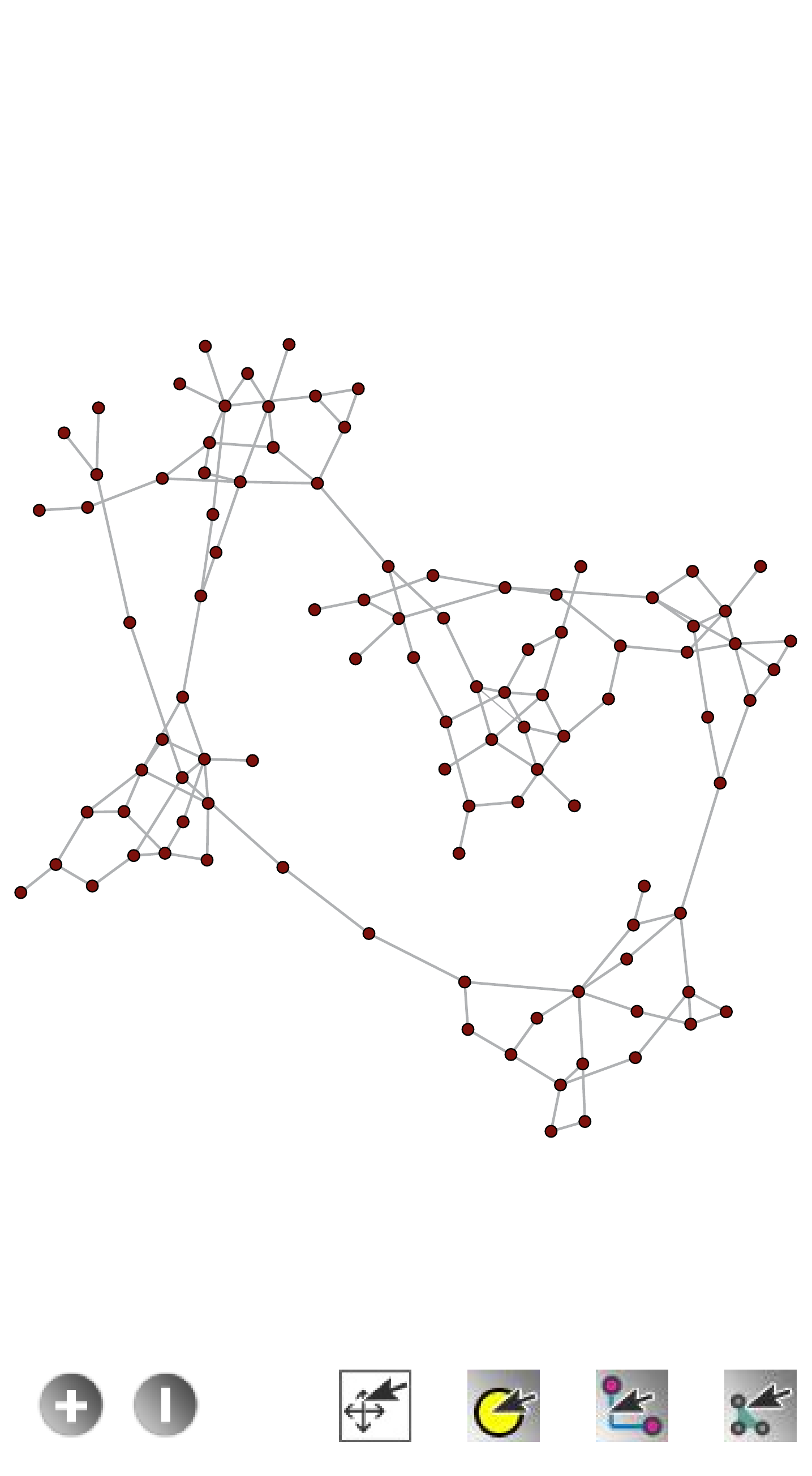}
\hspace{10mm}
\includegraphics[trim = 0mm 40mm 0mm 50mm,clip,scale=0.3]{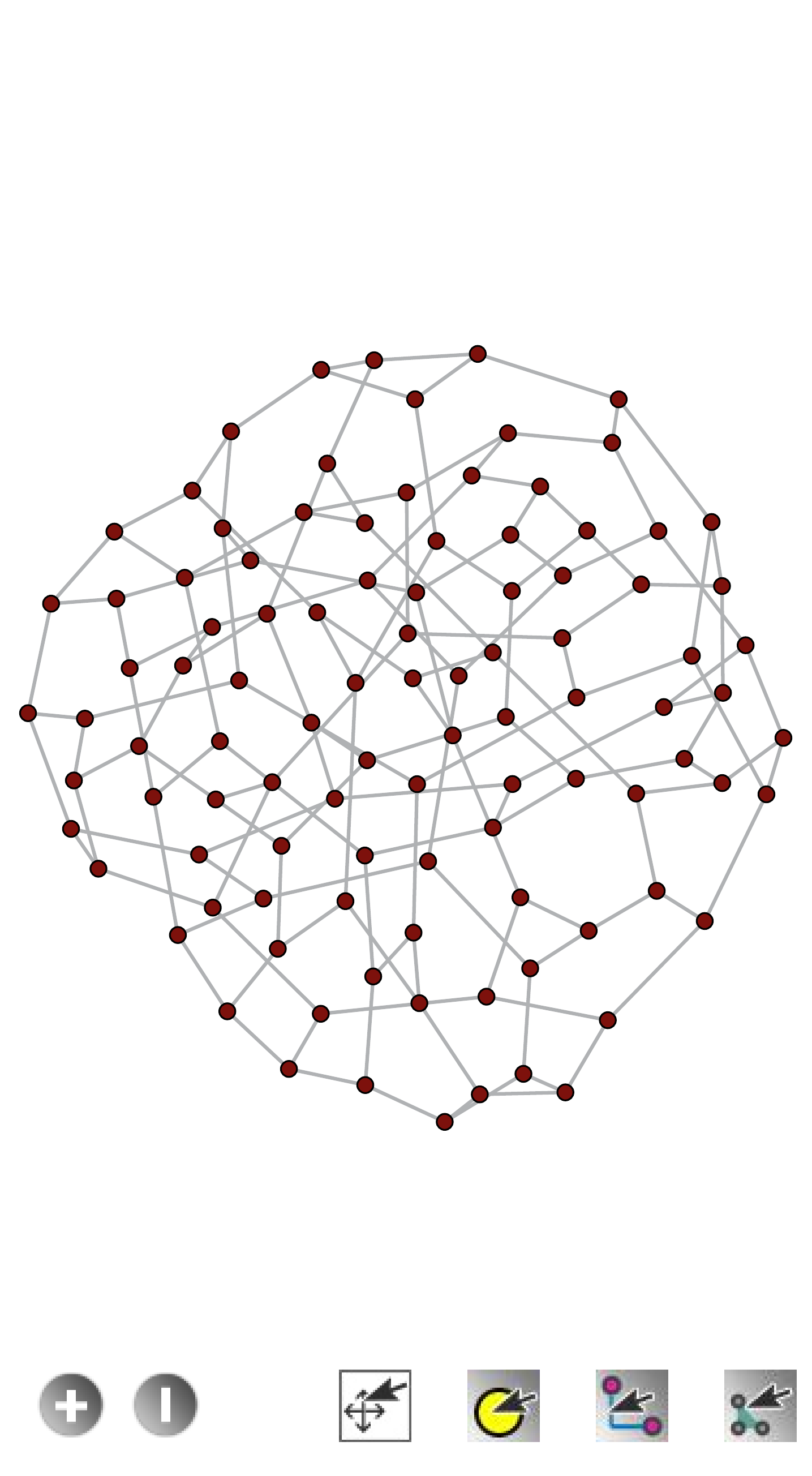}
\put(-260,-15){\large (a)}
\put(-70,-15){\large (b)}
\caption{Agents transform the initial interaction graph in (a), which has an average degree of 2.7, into a random 3-regular graph such as the one in (b) by following Algorithm \ref{RoundRandRegAlg}.  }
\label{init2rr3R}
\end{center}
\end{figure*}

\begin{figure*}[H!]
\begin{center}
\includegraphics[trim = 30mm 0mm 0mm 10mm,clip,scale=0.4]{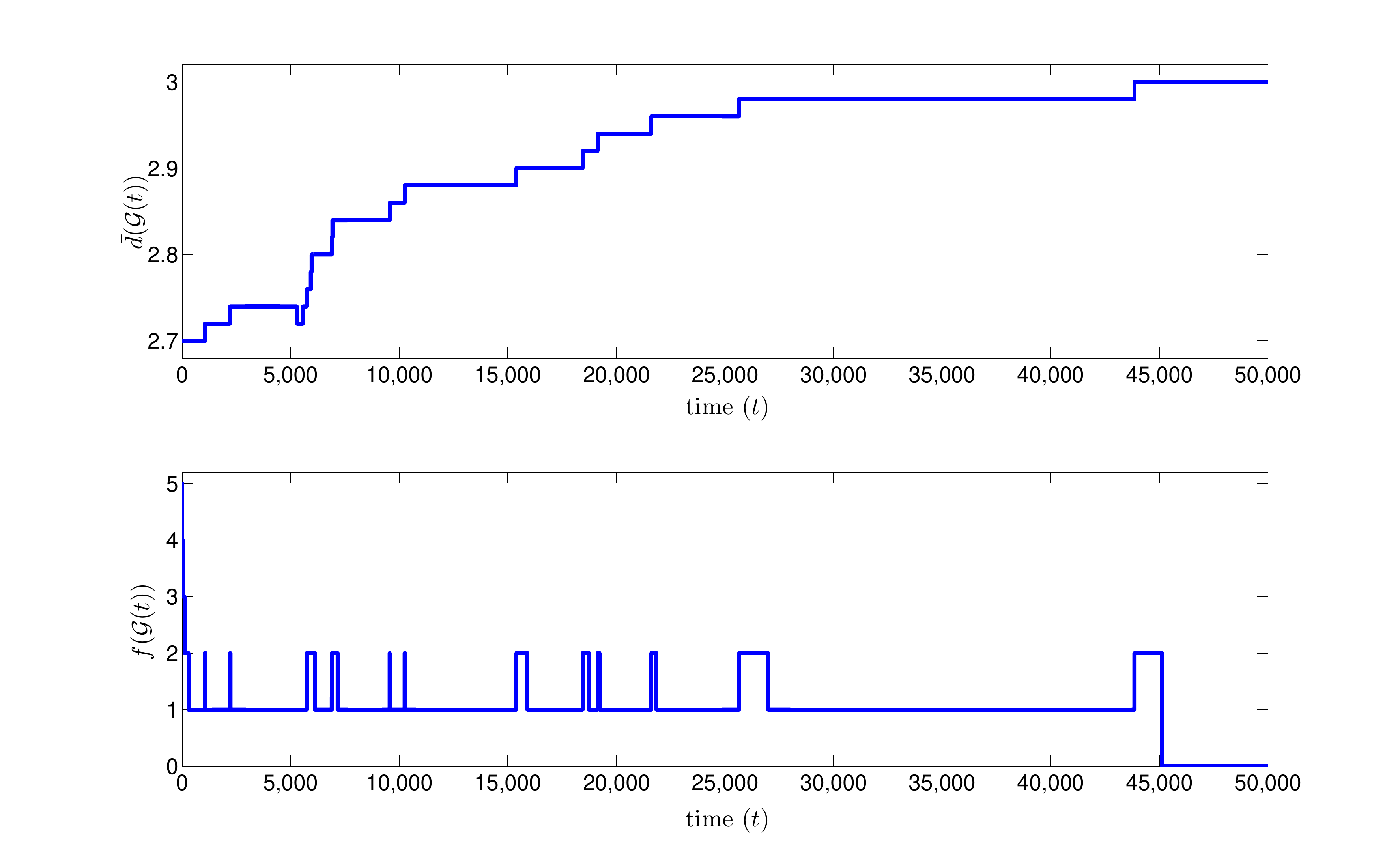}
\caption{The average degree, $\bar{d} (\mathcal{G}(t))$, and the degree range, $f (\mathcal{G}(t))$, for the first 50000 time steps of the simulation. Once a regular graph (i.e., $f (\mathcal{G}(t))=0$) is reached, both $\bar{d} (\mathcal{G}(t))$ and $f (\mathcal{G}(t))$ remain stationary under Algorithm \ref{RoundRandRegAlg}. 
}
\label{dbarrange3R}
\end{center}
\end{figure*}
\begin{figure*}[H!]
\begin{center}
\includegraphics[trim = 30mm 0mm 0mm 10mm,clip,scale=0.25]{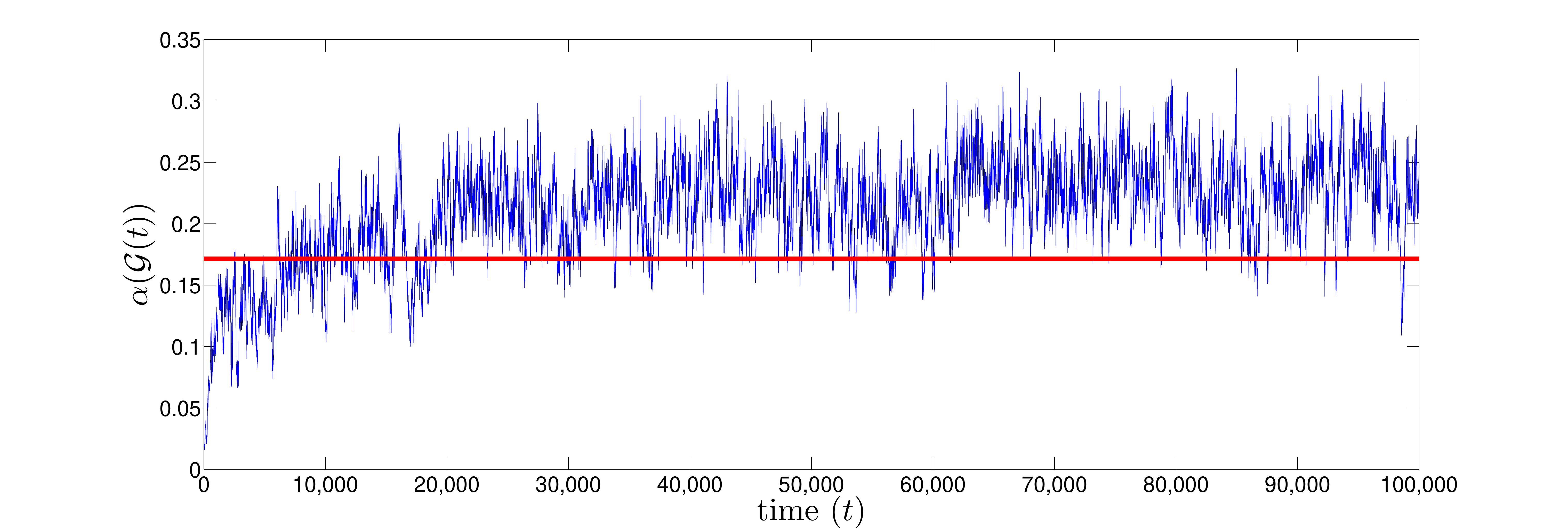}
\caption{The algebraic connectivity, $\alpha (\mathcal{G}(t))$, as the initial graph in Fig. \ref{init2rr3R}a evolves via Algorithm \ref{RoundRandRegAlg}. After a sufficiently large amount of time, $\alpha (\mathcal{G}(t))$ rarely drops below $3-2\sqrt{2}$ (marked with a horizontal solid line), as expected from random 3-regular graphs.
}
\label{algcon3R}
\end{center}
\end{figure*}
\section{Conclusions}
\label{conclusion}
In this paper, we presented a decentralized graph reconfiguration scheme for building robust multi-agent networks. In particular, we focused on the connectivity properties of the interaction graph as the robustness measure. Accordingly, we provided a decentralized method for transforming interaction graphs into well-connected graphs with a similar sparsity as the initial graph. More specifically, the proposed solution produces random $m$-regular graphs in the limit. Such graphs are expanders for any $m \geq 3$, i.e. they have the algebraic connectivity and expansion ratios bounded away from zero regardless of the network size. 



The proposed method was incrementally built in the paper. First, a single-rule grammar, $\Phi_R$, was designed for balancing the degree distribution in any connected network with an initial average degree $k$. $\Phi_R$ transforms the initial graph into a connected $k$-regular graph if $k \in \mathbb{N}$. If $k \notin \mathbb{N}$, then the degree range converges to 1 via $\Phi_R$. Next, $\Phi_R$ was extended to ensure that the interaction graph does not convergence to a poorly-connected $k$-regular graph. To this end, a local link randomization rule was added to $\Phi_R$. The resulting grammar, $\Phi_{RR}$, transforms the initial graph into a connected random $k$-regular graph if $k \in \mathbb{N}$.  Finally, $\Phi_{RR}$ was extended to obtain random regular graphs in the most general case, i.e. even when $k \notin \mathbb{N}$.  For any $k>2$, the resulting grammar, $\Phi^*$, leads to a connected random $m$-regular graph such that $m \in [k, k+2]$. Note that $k>2$ implies $m \geq 3$. Hence, the corresponding random $m$-regular graphs are expanders  with a similar sparsity as the initial graph. Some simulation results were also presented in the paper to demonstrate the performance of the proposed method.

\bibliography{MyReferences}

\end{document}